\DeclareMathOperator{\Tr}{Tr}
\DeclareMathOperator{\diag}{diag}
\newtheorem{Theo}{Theorem}
\newtheorem{Corol}{Corollary}
\begin{document}
	
\title{Waveform Optimization for MIMO Joint Communication and Radio Sensing Systems with Training Overhead\\}
\author{Xin Yuan,
	Zhiyong Feng,
	J. Andrew Zhang,
	Wei Ni,	\\
	Ren Ping Liu,
	Zhiqing Wei,
	and Changqiao Xu

	\thanks{ X. Yuan is with the Key Laboratory of the Universal Wireless
		Communications, Ministry of Education, Beijing University of Posts and Telecommunications, China, and the Global Big Data Technologies Center, University of Technology Sydney, Australia (email: yuanxin@bupt.edu.cn).}
	
	\thanks{ Z. Feng and Z. Wei are with the Key Laboratory of the Universal Wireless Communications, Ministry of Education, Beijing University of Posts and Telecommunications, China (email: \{fengzy, weizhiqing\}@bupt.edu.cn).}
	
	\thanks{ J. A. Zhang and R. P. Liu are with the Global Big Data Technologies Center, University of Technology Sydney, Australia (email: \{renping.liu, andrew.zhang\}@uts.edu.au).}
	
	\thanks{ W. Ni is with Commonwealth Scientific and Industrial Research Organization, Australia (email: wei.ni@data61.csiro.au).}
	
	\thanks{C. Xu is with the State Key Laboratory of Networking
		and Switching Technology, Beijing University of Posts and Telecommunications, China (e-mail: cqxu@bupt.edu.cn)}
}	
	
\maketitle

\begin{abstract}
In this paper, we study optimal waveform design to maximize mutual information (MI) for a joint communication and (radio) sensing (JCAS, a.k.a., radar-communication) multi-input multi-output (MIMO) downlink system. We consider a typical packet-based signal structure which includes training and data symbols. We first derive the conditional MI for both sensing and communication under correlated channels by considering the training overhead and channel estimation error (CEE). Then, we derive a lower bound for the channel estimation error and optimize the power allocation between the training and data symbols to minimize the CEE. Based on the optimal power allocation, we provide optimal waveform design methods for three scenarios, including maximizing MI for communication only and for sensing only, and maximizing a weighted sum MI for both communication and sensing. We also present extensive simulation results that provide insights on waveform design and validate the effectiveness of the proposed designs. 	

\end{abstract}

\begin{IEEEkeywords}
Mutual information, joint communication and sensing, waveform design, training sequence.
\end{IEEEkeywords}

%
\IEEEpeerreviewmaketitle

\section{Introduction}\label{sec::introduction}
\subsection{Background and Motivation}
A joint communication and (radio) sensing (JCAS, a.k.a., Radar-Communications) system that enables share of hardware and signal processing modules, can achieve efficient spectrum efficiency, enhanced security, and reduced cost, size, and weight~\cite{Kumari2018Sparsity,Chiriyath2017Radar,Paul2017Survey, Lushan2019}. 
JCAS systems can have many potential applications in intelligent transportation that require both communication links connecting vehicles and active environment sensing functions~\cite{Daniels18, zhang2019multibeam}. For JCAS systems, it is crucial to use a waveform simultaneously performing both communication and sensing function, and help improve the availability of the limited spectrum resources. To this end, one of the main challenges in JCAS systems lies in designing optimal or adequate waveforms that serve both purposes of data transmission and radio sensing.

Mutual information (MI) is an important measure that can be used for studying waveform designs for joint communication and sensing systems. To be specific, for communications the MI between wireless channels and the received communication signals can be employed as the waveform optimization criterion, while for sensing, the conditional MI between sensing channels and the reflected sensing signals can be measured~\cite{Bell1993Information,Yang2007MIMO}. 
Despite a significant amount of research effort on waveform design in both communication and sensing systems, existing joint waveform designs for JCAS systems are still limited. It is known that the training sequence for channel estimation has a significant impact on communication capacity, particularly for multiple input multiple output (MIMO) systems \cite{Hassibi2003How, Biguesh2006}. However, there has been no study on the waveform design for JCAS, which takes into consideration the typical signal packet structure containing the training sequence.

\subsection{Related Work}
Information theory has been used to design radar waveform~\cite{Bell1993Information,Tang2010MIMO,Yang2007MIMO,Zhu2017Information,Chen2013Adaptive}. Bell~\cite{Bell1993Information} was the first to apply information theory to optimize radar waveforms to improve target detection.
In~\cite{Zhu2017Information}, the optimal radar waveform was proposed to maximize the detection performance of an extended target in a colored noise environment by using MI as waveform design criteria. Two criteria, namely, the maximization of the conditional MI and the minimization of the minimum mean-square error (MMSE), were studied in~\cite{Yang2007MIMO} to optimize the waveform design for MIMO radars by exploiting the covariance matrix of the extended target impulse response.
In~\cite{Tang2010MIMO}, the optimal waveform design for MIMO radars in colored noise was also investigated by considering two criteria: maximizing the MI and maximizing the relative entropy between two hypotheses that the target exists or does not exist in the echoes. 
In~\cite{Chen2013Adaptive}, a two-stage waveform optimization algorithm was proposed for an adaptive MIMO radar to unify the signal design and selection procedures. The algorithm is based on the constant learning of the radar environment at the receivers and the adaptation of the transmit waveform to dynamic radar scene.
In~\cite{Liu2016Robust}, a robust waveform design based on the Cram{\'e}r-Rao bound was proposed for co-located MIMO radars to improve the worst-case estimation accuracy in the presence of clutters.

For communication and radar co-existing systems that transmit and process respective signals, the MI has also been adopted for waveform design to minimize the interference to each other.
In~\cite{Chiriyath2016Joint,Chiriyath2016Inner}, inner bounds on both the radar estimation rate for sensing and the data rate for communication were derived for the co-existing systems. 
Liu \textit{et al.}~\cite{Liu2017Robust} studied transmit beamforming for spectrum sharing between downlink MU-MIMO communication and co-located MIMO radar, to maximize the detection probability for sensing while guaranteeing the transmit power for downlink users.
In~\cite{Chiriyath2019Novel}, a minimum-estimation-error-variance waveform design method was proposed to optimize the spectral shape of a unimodular radar waveform and maximize the performance of both the radar and communications
In~\cite{Paul2016Joint}, the radar waveform was designed based on a performance bound that is derived from jointly maximizing radar estimation rate and communication data rate.

Only a few studies have investigated the MI for JCAS systems~\cite{XU2015102,Liu2017Adaptive,liu2018robust}.
In~\cite{XU2015102}, considering a JCAS MIMO setup, the expressions for radar mutual information and communication channel capacity were derived. In~\cite{Liu2017Adaptive,liu2018robust}, an integrated waveform design was proposed for OFDM JCAS systems to improve the MI for both communication and sensing by considering extended targets and frequency-selective fading channels. 

\subsection{Contributions}
This paper presents information theoretically optimal waveform designs for a JCAS MIMO downlink system with a signal packet structure, including training sequence and information data symbols. In the JCAS MIMO downlink, a node sends MIMO signals to another node for communications and simultaneously uses the reflected signals for sensing the surrounding environment. We first derive the conditional MI for sensing and communication by taking both the training overhead and channel estimation error (CEE) into consideration, and then provide the optimal waveform designs for several different options of maximizing conditional MI. For sensing, both training and data sequences directly contribute to the MI; while for communications, only the data sequence contributes to the MI in the presence of CEE linked to the training sequence.

The key contributions of the paper are summarized as follows.
\begin{enumerate}
	\item We derive the MI expressions for both sensing and communication. We reveal the significantly different impact of the training and data sequences on the MI of sensing and communications.
	\item We design the optimal power allocation scheme between the training and data sequences under MMSE estimators for correlated MIMO communication channels. 
	\item We provide the optimal waveform designs for three scenarios, including maximizing the MI only for sensing, and only for communication, and maximizing the weighted MI for joint communication and sensing. 
	\item We conduct extensive simulations to corroborate the effectiveness of our proposed power allocation and waveform design. The results provide important insights into the trade-off of MI between communication and sensing in a JCAS system, and the non-negligible impact of training sequence on the MI.	
\end{enumerate}

\subsection{Organization}
The rest of this paper is organized as follows. In Section~\ref{sec::system model}, the system model is introduced. In Section~\ref{sec::MI}, we derive the conditional MI for both communication and sensing in the JCAS system. In Section \ref{sec::cee}, we derive a lower bound for CEE and develop an optimal power allocation strategy between training and data sequences. In Section~\ref{sec::waveform design}, the optimal waveform design methods for optimal communication, optimal sensing, and JCAS are investigated. Section \ref{sec::simulation} presents simulation results. Section~\ref{sec::conclusion} concludes the paper.

\textit{Notation:} Lower-case bold face $(\mathbf{x})$ indicates vector, and upper-case bold face $(\mathbf{X})$ indicates matrix. For a diagonal matrix $\mathbf{X}$, $\mathbf{X}^a$ denotes the power $a$ operation to each diagonal element. $\mathbf{I}$ denotes the identity matrix, $\mathbb{E}(\cdot)$ denotes expectation. $(\cdot)^T$, $(\cdot)^H$, $(\cdot)^{\ast}$, $(\cdot)^{-1}$ and $(\cdot)^\dagger$ denote transposition, conjugate transportation, conjugate, inverse and pseudo-inverse, respectively. $\det(\cdot)$ and $\Tr(\cdot)$ denote the determinant and trace of a matrix, respectively.
	
\section{System Model}\label{sec::system model}
\begin{figure}
	\centering
	\includegraphics[width=0.45\textwidth]{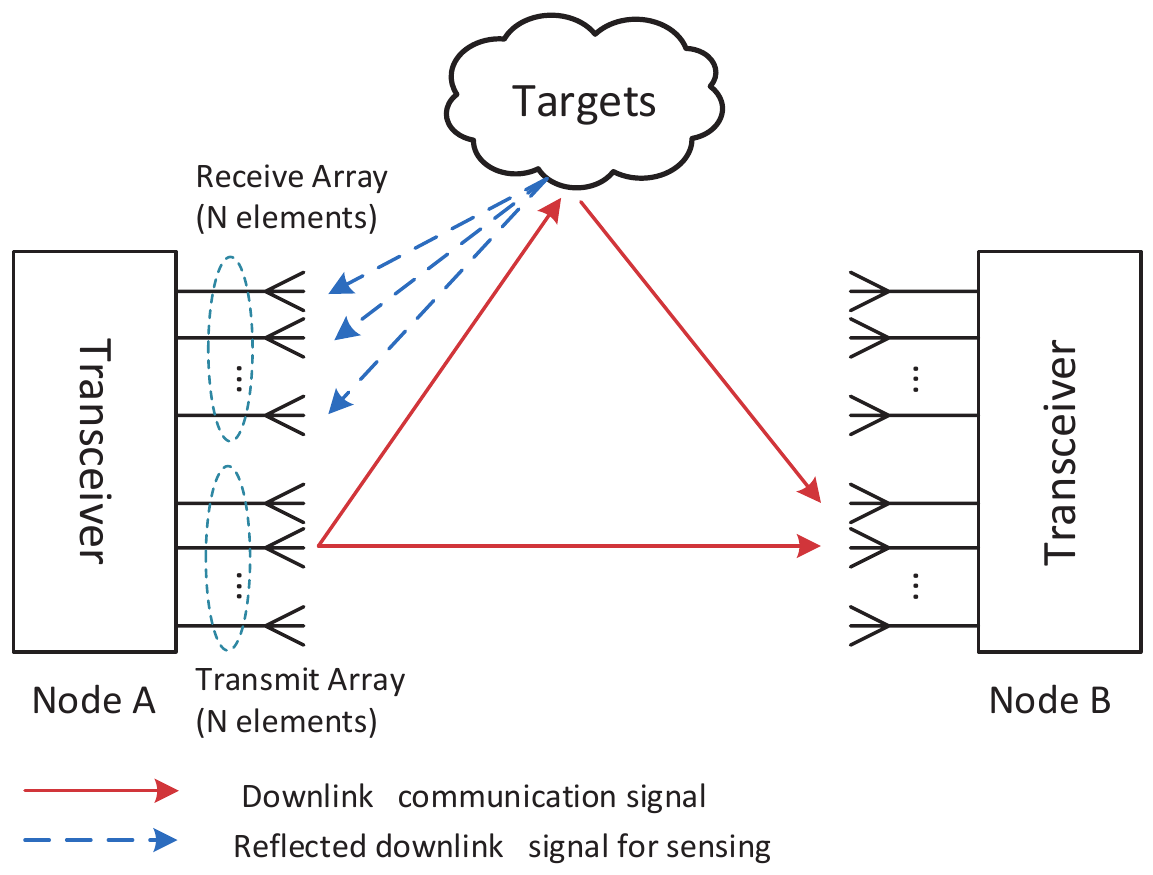}
	\caption{A joint communication and sensing (JCAS) MIMO downlink system, where node A transmits data to node B, and simultaneously senses the environment to determine, e.g., the locations and speeds of the nearby objects, by using the reflected transmitted signal.}
	\label{fig::system model}
\end{figure}

We consider a JCAS MIMO system where two nodes A and~B perform point-to-point communications in time division duplex (TDD) mode, and simultaneously sense the environment to determine, e.g., the locations and speeds of nearby objects, as illustrated in Fig.~\ref{fig::system model}. Each node has $N$ antennas configured in the form of a uniform linear array (ULA). At the stage that node A is transmitting to node B, we consider downlink sensing where the reflection of the transmitted signal is used for sensing by node~A. 
The transmitted symbols are known to node A. The channels of sensing and communications are correlated but different. 
To suppress leakage signals from the transmitter and enable the reception of clear sensing signals, each node is assumed to be equipped with two spatially widely separated antenna arrays, i.e., $N$ transmit antennas and $N$ receive antennas configured in the form of two uniform linear arrays (ULAs). Detailed configurations of the transceiver for JCAS systems are beyond the scope of this paper, and readers can refer to~\cite{Lushan2019}~and~\cite{zhang2019multibeam} for more details. 

\begin{figure}
	\centering
	\includegraphics[width=0.45\textwidth]{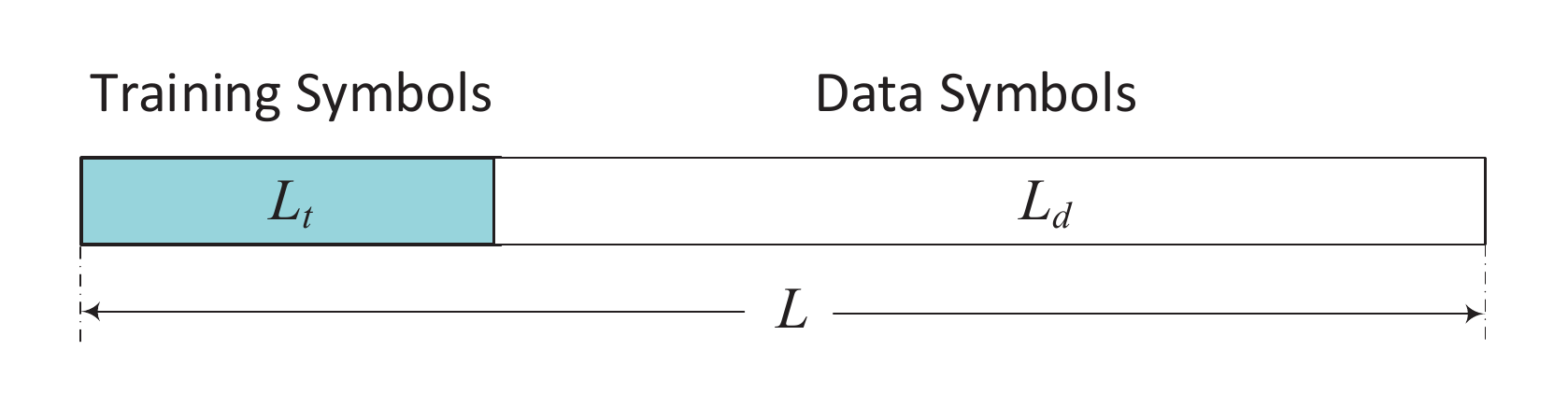}
	\caption{Transmit symbols: including training and data symbols. For communications, the non-precoded training symbols are used for synchronization and channel estimation, and the data symbols are typically precoded data payload. While for sensing, both the training and data symbols are used for targets detection}.
	\label{fig::training_signal_design}
\end{figure}

In practice, a communication packet typically includes data payload, together with training signals for synchronization and channel estimation. The training signals can have various forms in different standards and systems. For example, it can be comb pilots or occupy whole resource blocks in 5G New Radio. Without loss of generality, we consider a general data structure which consists of a sequence of $L_t$ training symbols and $L_d$ data symbols for each spatial stream, as illustrated in Fig.~\ref{fig::training_signal_design}. Concatenating the symbols from all $N$ spatial streams into a matrix $\mathbf{X}$, we have 
$\mathbf{X}=[\mathbf{X}_t,\mathbf{X}_d]$,
where $\mathbf{X}_t=\left[\bm{X}_{t}\left(1\right) ,\cdots,\bm{X}_{t}\left({N} \right) \right]^T \in {\mathbb{C}}^{N \times L_t}$ and $\mathbf{X}_d=\left[\bm{X}_d\left(1\right) ,\cdots,\bm{X}_d\left({N}\right) \right]^T \in {\mathbb{C}}^{N \times L_d}$, with $\bm{X}_t(n)$ and $\bm{X}_d(n)$ denoting the training and data symbols transmitted from the $n$-th antenna, respectively. We assume that $\mathbf{X}_d(n) \in \mathbb{C}^{L_d \times 1}$ is independent and identically distributed (i.i.d.) Gaussian variable with zero mean and covariance matrix $\frac{1}{L_d}\mathbb{E}\left\lbrace \mathbf{X}_d\mathbf{X}_d^H\right\rbrace = \varSigma_{\mathbf{X}_d}$. Let $ \frac{1}{L_t}\mathbf{X}_t\mathbf{X}_t^H = \varSigma_{\mathbf{X}_t}$. $\mathbf{X}_t(n) \in \mathbb{C}^{L_t \times 1}$ are typically designed to be orthogonal to each other and $L_t\geq N$, and hence $\varSigma_{\mathbf{X}_t}$ is a scaled diagonal matrix. More advanced designs of training sequences may be possible. The orthogonal design considered here is a typical setting in MIMO communication systems, and it is also typically used in MIMO radar to exploit the degrees of freedom offered by multiple antennas~\cite{Galati2013Waveforms}. 
Most of the results presented in this paper can also be readily extended to systems using other training sequences.
 
The transmitted signal $\mathbf{X}$, including $\mathbf{X}_t$ and $\mathbf{X}_d$, is used for both communication and radio sensing operations.
Let $P$ be the total energy of the transmit signal, $P_t$ the energy of the training signals, and $P_d$ the energy of the data signals. $P=P_t+P_d$. The average energy of the training and data symbols are $\sigma_t^2=\frac{1}{N L_t}\sum_{n=1}^{N} \mathbf{X}_t(n)^H\mathbf{X}_t(n)$, and $\sigma_d^2=\frac{1}{N L_d}\sum_{n=1}^{N} \mathbb{E}\left[\mathbf{X}_d(n)^H\mathbf{X}_d(n)\right]$, respectively. We also define a weighting value $\kappa$, $0<\kappa<1$, and have $P_d=\kappa P=NL_d\sigma_d^2$ and $P_t=(1-\kappa)P=\sigma_t^2NL_t$. We optimize the power allocation between training and data symbols to maximize the communication capacity, as will be described in Section~\ref{subsec::Optimal Training Signals Design}.

\subsection{Communication Model}\label{subsec::communication}
For communication, the received training and data signals at node B can be respectively given by
\begin{equation}\label{training_signal}
{\mathbf{Y}}^t_{\rm com} = \mathbf{H}\mathbf{X}_t +\mathbf{N}_{tc};
\end{equation}
\begin{equation}\label{communication_data_signal}
\begin{aligned}
{\mathbf{Y}}^d_{\rm com} & = \mathbf{H}\mathbf{X}_d +\mathbf{N}_{dc}\\
& = \left(\hat{\mathbf{H}} + \Delta \mathbf{H}\right)\mathbf{X}_d +\mathbf{N}_{dc}\\
& = \hat{\mathbf{H}}\mathbf{X}_d +\underbrace{\Delta \mathbf{H}\mathbf{X}_d +\mathbf{N}_{dc}}_{\mathbf{N}'_c},
\end{aligned}
\end{equation}
where $\mathbf{H}=\left[\mathbf{h}_1,\cdots,\mathbf{h}_j,\cdots, \mathbf{h}_{N}\right] \in \mathbb{C}^{N \times  N} $ is the channel matrix with $\mathbf{h}_j=\left[h_{1,j},h_{2,j},\cdots,h_{N,j}\right]^T$ denoting the $j$-th row of~$\mathbf{H}$; $\mathbf{N}_{tc} \in \mathbb{C}^{N \times L_t}$ and $\mathbf{N}_{dc} \in \mathbb{C}^{N \times L_d}$ are both addictive white Gaussian noise (AWGN) with zero mean and element-wise variance $\sigma_n^2$. 
It is reasonable to assume that $\mathbf{N}_{tc}$, $\mathbf{N}_{dc}$ and $\mathbf{X}_d$ are mutually independent. The signal ${\mathbf{Y}}^t_{\rm com}$ is used for channel estimation. We assume that a linear channel estimation based on a minimum mean-square error (MMSE) criterion~\cite{Artigue2011On} is applied. In this case, the channel estimate $\hat{\mathbf{H}}$ and the estimation error $\Delta \mathbf{H}$ are uncorrelated~\cite{Wang2009Near}. Let $\Delta \mathbf{H}=[\Delta\mathbf{h}_{1},\cdots,\Delta\mathbf{h}_{j},\cdots,\Delta\mathbf{h}_{N}]$, where $\Delta\mathbf{h}_{j} = [\Delta h_{1j},\Delta h_{2j},\cdots,\Delta h_{Nj}]^T$ the $j$-th row of~$\Delta\mathbf{H}$. The coefficients $\Delta h_{ij}$ are random variables following i.i.d. zero mean circularly symmetric complex Gaussian with variance $\sigma_e^2$, i.e., $\mathbb{E}\left[\Delta \mathbf{H}\Delta \mathbf{H}^H \right] = N \sigma_e^2\mathbf{I}_{N}$. We will evaluate $\sigma_e^2$ and link it to $\mathbf{X}_t$ and $\mathbf{N}_{tc}$ in Section \ref{sec-cee}.

The matrix $\mathbf{N}'_c$ combines the CEE and noise, and can be viewed as an equivalent additive noise with zero mean and covariance.
The variance $\sigma_{n'}^2$ can be obtained as 	
\begin{subequations}
\begin{align}
\mathbb{E}[\mathbf{N}'_c{\mathbf{N}'_c}^H] & = \mathbb{E}[\Delta \mathbf{H}\mathbf{X}_d\mathbf{X}_d^H\Delta \mathbf{H}^H]+\mathbb{E}\left[\mathbf{N}_{dc}\mathbf{N}_{dc}^H\right]\\
& = \mathbb{E}[\Delta \mathbf{H}\varSigma_{\mathbf{X}_d}\Delta \mathbf{H}^H]+\mathbb{E}\left[\mathbf{N}_{dc}\mathbf{N}_{dc}^H\right]\\
& = 
\mathbb{E}\left\lbrace \diag\left\lbrace \Delta\mathbf{h}_{1}^T \varSigma_{\mathbf{X}_d}\Delta\mathbf{h}_{1}^{\ast},\cdots,\Delta\mathbf{h}_{N}^T \varSigma_{\mathbf{X}_d}\Delta\mathbf{h}_{N}^{\ast}\right\rbrace \right\rbrace \!+\! L_d\sigma_n^2\mathbf{I}_N \\
& = \diag\left\lbrace {\Tr}\left(\varSigma_{\mathbf{X}_d}\mathbb{E}[\Delta\mathbf{h}_{1}^{\ast}\Delta\mathbf{h}_{1}^{T}]\right),\cdots,{\Tr}\left(\varSigma_{\mathbf{X}_d}\mathbb{E}[\Delta\mathbf{h}_{N}^{\ast}\Delta\mathbf{h}_{N}^T]\right)\right\rbrace + L_d\sigma_n^2\mathbf{I}_N \\
& = N L_d\sigma_d^2\sigma_e^2 \mathbf{I}_N +L_d\sigma_n^2\mathbf{I}_N
= L_d\left(\frac{P_d}{L_d}\sigma_e^2+\sigma_n^2\right) \mathbf{I}_N\\
&\triangleq L_d\sigma_n'^2\mathbf{I}_N,
\end{align}	
\end{subequations}
where $\sigma_n'^2 = \frac{P_d}{L_d}\sigma_e^2+\sigma_n^2$.

Let $\mathbf{R}_H = \frac{1}{N}\mathbb{E}[\mathbf{H}^H\mathbf{H}]$ be the channel covariance matrix, and $\mathbf{R}_H$ is a positive semi-definite matrix. We assume that $\mathbf{R}_H$ is known to Node A. We can write the random channel matrix as $\mathbf{H}=\mathbf{H}_0 \mathbf{R}_H^{\frac{1}{2}}$, where the entries of~$\mathbf{H}_0$ are i.i.d. zero mean circularly symmetric complex Gaussian with unit variance.  

\subsection{Sensing Model}\label{subsec::sensing}
Node A uses the reflection of the transmitted signal for sensing. The received signal, denoted by $\mathbf{Y}_{\rm rad}$, is given by 
\begin{equation}\label{eq::radar_signal_training}
\begin{aligned}
\mathbf{Y}_{\rm rad} & =  \mathbf{G}\mathbf{X}+\mathbf{N}
= \mathbf{G}[\mathbf{X}_t,\mathbf{X}_d] +\left[\mathbf{N}_{tr},\mathbf{N}_{dr}\right]\\
&= [\mathbf{G}\mathbf{X}_t+\mathbf{N}_{tr},\mathbf{G}\mathbf{X}_d+\mathbf{N}_{tr}], 
\end{aligned}
\end{equation}
where $\mathbf{G} = [\mathbf{g}_1,\cdots,\mathbf{g}_N]$ is the channel matrix to be sensed with its $j$-th column being $\mathbf{g}_j=\left[g_{1j},g_{2j},\cdots,g_{N j}\right]^T$, and $\mathbf{g}_j,\,j=1,\cdots,N$ are independent of each other; $\mathbf{N}_{tr} =[\mathbf{n}_{tr,1},\mathbf{n}_{tr,2},\cdots,\mathbf{n}_{tr,N}] \in \mathbb{C}^{L_t \times N}$ and $\mathbf{N}_{dr}=[\mathbf{n}_{dr,1},\mathbf{n}_{dr,2},\cdots,\mathbf{n}_{dr,N}] \in \mathbb{C}^{L_d \times N}$ are AWGN with zero mean and covariance matrix 
$\mathbb{E}\left\lbrace\mathbf{N}_{tr}{\mathbf{N}_{tr}^H}\right\rbrace = N\sigma_n^2 \mathbf{I}_{L_t}$ and $\mathbb{E}\left\lbrace\mathbf{N}_{dr} {\mathbf{N}_{dr}^H}\right\rbrace = N\sigma_n^2 \mathbf{I}_{L_d}$. Let $\varSigma_\mathbf{G}= \frac{1}{N}\mathbb{E}\{\mathbf{G}\mathbf{G}^H\}$ be the spatial correlation matrix. It is assumed to be full-rank and also known to Node A. 

For both the communication and sensing channels, we assume that they remain unchanged during the period of a packet. 
Note that for both communication and sensing, the channel matrices include large-scale path loss and small-scale fading. The path loss of sensing can vary significantly for different multi-path components depending on the number of nearby objects and their locations, and therefore, we consider the mean path loss herein. Our optimization results only depend on the ratio between the mean path losses of communication and sensing.



\section{Mutual Information}\label{sec::MI}
In this section, we first derive the expression for the MI of sensing by using both the training and data symbols. Then, we present the MI for communications under CEEs.

\subsection{MI for Sensing}
The MI between the sensing channel matrix $\mathbf{G}$ (or the ``target impulse response'' matrix in radar) and reflected signals $\mathbf{Y}_{\rm rad}$ given the knowledge of $\mathbf{X}$ can be used to measure the sensing performance~\cite{Tang2010MIMO}. With our model \eqref{eq::radar_signal_training}, the MI is given by
\begin{equation}\label{eq::MI_definition}
\begin{aligned}
I\left( \mathbf{G};{\mathbf{Y}}_{\rm rad}|\mathbf{X}\right) &= h\left({\mathbf{Y}}_{\rm rad}|{\mathbf{X}} \right) -h\left({\mathbf{Y}}_{\rm rad}|{\mathbf{X}},\mathbf{G} \right)\\
& = h\left({\mathbf{Y}}_{\rm rad}|[{\mathbf{X}_t},{\mathbf{X}_d}]^T\right)
-h\left({\mathbf{Y}}_{\rm rad}|[{\mathbf{X}_t},{\mathbf{X}_d}]^T,\mathbf{G}\right)\\
& = h\left({\mathbf{Y}}_{\rm rad}|[{\mathbf{X}_t},{\mathbf{X}_d}]^T\right)-h\left(\mathbf{N}_r \right),
\end{aligned}
\end{equation}
where $h(\cdot)$ denotes the entropy of a random variable. 
Provided the noise vector $\mathbf{N}_{r,j} = \begin{bmatrix}
	\mathbf{n}_{tr,j}\\ 
	\mathbf{n}_{dr,j}
	\end{bmatrix},\,j=1,\cdots,N$ are independent of each other, the conditional probability density function (PDF) of $\mathbf{Y}_{\rm rad}$ conditioned on $\mathbf{X}$ is given by 
\begin{subequations}\label{eq::PDF_of_Prad 1}
\begin{align}
 p\left(\mathbf{Y}_{\rm rad}|\mathbf{X}\right) & = p\left({\mathbf{Y}}_{\rm rad}|[\mathbf{X}_t,\mathbf{X}_d]^T\right) =\prod_{j=1}^{N}p\left(\mathbf{y}_{{\rm rad},j}|[\mathbf{X}_t,\mathbf{X}_d]^T\right) \label{eq::PDF_of_Prad 1b}\\
&=\prod_{j=1}^{N} \frac{1}{\pi^{L}\det\left([\mathbf{X}_t,\mathbf{X}_d]^T\varSigma_\mathbf{G}[\mathbf{X}_t,\mathbf{X}_d]^{\ast}+\sigma_n^2\mathbf{I}_{L}\right)}\nonumber\\
&\qquad\qquad\times \exp\left(-{\mathbf{y}}_{{\rm rad},j}^H \left([\mathbf{X}_t,\mathbf{X}_d]^T \varSigma_\mathbf{G} [\mathbf{X}_t,\mathbf{X}_d]^{\ast} + \sigma_n^2\mathbf{I}_{L}\!\right)^{-1} \mathbf{y}_{{\rm rad},j}\right) \label{eq::PDF_of_Prad 1c} \\
&= \frac{1}{\pi^{L N}\det^{N}\left([\mathbf{X}_t,\mathbf{X}_d]^T\varSigma_\mathbf{G} [\mathbf{X}_t,\mathbf{X}_d]^{\ast}+\sigma_n^2\mathbf{I}_{L}\right)}\nonumber\\
&\qquad\qquad\times \exp\left\{ -{\Tr}\left[\left([\mathbf{X}_t,\mathbf{X}_d]^T \varSigma_\mathbf{G} [\mathbf{X}_t,\mathbf{X}_d]^{\ast} \!+\! \sigma_n^2\mathbf{I}_{L}\right)^{-1} \mathbf{Y}_{\rm rad} {\mathbf{Y}^H_{\rm rad}}\right]\!\right\}\! \label{eq::PDF_of_Prad 1d},
\end{align} 
\end{subequations}
where~\eqref{eq::PDF_of_Prad 1c} is obtained based on  the PDF of circularly symmetric complex Gaussian distribution, and
\begin{subequations}\label{eq::covariance y}
\begin{align}
 \mathbb{E}\{\mathbf{y}_{{\rm rad},i}\mathbf{y}^H_{{\rm rad},i}\}
& = \mathbb{E}\!\left\{\!\begin{bmatrix}
\mathbf{X}^T_t\mathbf{g}_j\!+\!\mathbf{n}_{tr,j}\\ 
\mathbf{X}^T_d\mathbf{g}_j\!+\!\mathbf{n}_{dr,j}
\end{bmatrix}\left[\mathbf{g}_j^H\mathbf{X}_t^{\ast}+\mathbf{n}^H_{tr,j},\mathbf{g}_j^H\mathbf{X}^{\ast}_d+\mathbf{n}^H_{dr,j}\right]\right\}\!\label{eq::covariance ya}\\
& = [\mathbf{X}_t,\mathbf{X}_d]^T\mathbb{E}\{\mathbf{g}_j\mathbf{g}_j^H\}[\mathbf{X}_t,\mathbf{X}_d]^{\ast}+\mathbb{E}\{\diag\{\mathbf{n}_{tr,j}\mathbf{n}^H_{tr,j},\mathbf{n}_{dr,j}\mathbf{n}^H_{dr,j}\}\}\label{eq::covariance yb}\\
& = \frac{1}{N}[\mathbf{X}_t,\mathbf{X}_d]^T\varSigma_\mathbf{G} [\mathbf{X}_t,\mathbf{X}_d]^{\ast}+\sigma_n^2\mathbf{I}_{L},\label{eq::covariance yc}
\end{align}
\end{subequations}
where~\eqref{eq::covariance yb} is conditioned on $\mathbf{X}$, and $\mathbb{E}\{\mathbf{g}_j\mathbf{g}_j^H\} = \frac{1}{N}\mathbb{E}\{\mathbf{G}\mathbf{G}^H\}=\varSigma_\mathbf{G}$ in~\eqref{eq::covariance yc} since $\mathbf{g}_j,\,j=1,\cdots,N$ are independent of each other.

Based on~\eqref{eq::PDF_of_Prad 1}, 
the entropy of $\mathbf{Y}_{\rm rad}$ conditional on $\mathbf{X}$ can be obtained as 
\begin{subequations}\label{eq::entropy_training}
\begin{align}
 h\left( \mathbf{Y}_{\rm rad}|\mathbf{X}\right)
& = L N \log_2(\pi)+ L N \!+\! N \log_2\left[ \det\left([\mathbf{X}_t,\mathbf{X}_d]^T\varSigma_{\mathbf{G}} [\mathbf{X}_t,\mathbf{X}_d]^{\ast} +\sigma_n^2 \mathbf{I}_{L}\right) \right]\\
& =  L N \log_2(\pi)+ L N \!+\! N \log_2\left[ \det\left([\mathbf{X}_t,\mathbf{X}_d]^{\ast}[\mathbf{X}_t,\mathbf{X}_d]^T\varSigma_{\mathbf{G}} \!+\!\sigma_n^2 \mathbf{I}_{N}\right) \right] \label{eq::entropy_training b}\\
& =  L N \log_2(\pi)+ L N \!+\! N \log_2\!\left[\!(\sigma_n^2)^{{\color{blue}L-N}} \det\!\left(\!\mathbf{X}^{\ast}_t \mathbf{X}^T_t\varSigma_{\mathbf{G}}\!+\!\mathbf{X}^{\ast}_d\mathbf{X}_d^T\varSigma_{\mathbf{G}} \! +\!\sigma_n^2 \mathbf{I}_{N}\!\right) \!\right]\!,\label{eq::entropy_training c} 
\end{align}
\end{subequations}
where~\eqref{eq::entropy_training c} is based on the \textit{Sylvester's determinant} theorem~\cite{gilbert1991positive}, i.e., 
\begin{equation}
	\begin{aligned}
	&\det\left(\mathbf{A}_{M\times N}\mathbf{B}_{N\times M}+\sigma_n^2\mathbf{I}_{M}\right)
	 = (\sigma_n^2)^{{\color{blue}M-N}} \det\left(\mathbf{B}_{N\times M}\mathbf{A}_{M\times N}+\sigma_n^2\mathbf{I}_{N}\right).
	\end{aligned}
\end{equation}
The columns of the noise matrix $\mathbf{N}_{r}$ follow the i.i.d. multivariate complex Gaussian distribution with zero mean and covariance matrix $\sigma^2_{n}\mathbf{I}_N$, and the entropy of $\mathbf{N}_{r}$ is given by
\begin{equation}\label{eq::entropy_noise 1}
\begin{aligned}
h\left( \mathbf{N}_{r}\right)
&= L N \log_2(\pi)+ L N+ N \log_2\left[ \det\left(\sigma_n^2\mathbf{I}_{N}\right) \right]. 
\end{aligned}
\end{equation}
By substituting~\eqref{eq::entropy_training} and~\eqref{eq::entropy_noise 1} into~\eqref{eq::MI_definition}, the MI for sensing can be obtained as\\
\begin{equation}
	\begin{aligned}\label{eq::MI_sensing}
	I\left( \mathbf{G};{\mathbf{Y}}_{\rm rad}|\mathbf{X}\right) & \!=\! N \log_2\!\left[\! \det\left(\frac{\mathbf{X}^{\ast}_t \mathbf{X}^T_t\varSigma_{\mathbf{G}}\!+\!\mathbf{X}^{\ast}_d\mathbf{X}_d^T\varSigma_{\mathbf{G}}}{\!\left(\!\sigma_n^2\right) ^{{\color{blue}L-N}}}\!+\!\mathbf{I}_{N}\!\right)\!\!\right]\!.
		\end{aligned}
\end{equation}

\subsection{MI for Communication}
The MI for communication is defined as the mutual dependence between the transmit signals of node A and the received signals of node B, conditional on the estimated channel matrix $\hat{\mathbf{H}}$. With the Gaussian assumption of CEE, the conditional PDF of ${\mathbf{Y}}^d_{\rm com}$ on $\hat{\mathbf{H}}$ is given by
\begin{subequations}\label{PDF_of_Pcom}
\begin{align}
&p\left( \mathbf{Y}^d_{\rm com}|\hat{\mathbf{H}}\right)
=\prod_{i=1}^{L_d}p\left(\mathbf{y}^d_{{\rm com},i}|\hat{\mathbf{H}} \right)\label{PDF_of_Pcom a}\\
&=\prod_{i=1}^{L_d} \frac{1}{\pi^{N}\det\left(\hat{\mathbf{H}}\varSigma_{\mathbf{X}_d} {\hat{\mathbf{H}}}^H+\sigma_{n'}^2\mathbf{I}_{N}\right) } \exp\left( -{\mathbf{y}^d_{{\rm com},i}}^H \left(\hat{\mathbf{H}} \varSigma_{\mathbf{X}_d} {\hat{\mathbf{H}}}^H+ \sigma_{n'}^2\mathbf{I}_{N}\right)^{-1} \mathbf{y}^d_{{\rm com},i} \right) \label{PDF_of_Pcom b}\\
&= \frac{1}{\pi^{L_d N}\det^{L_d}\left(\hat{\mathbf{H}}\varSigma_{\mathbf{X}_d} {\hat{\mathbf{H}}}^H+\sigma_{n'}^2\mathbf{I}_{N}\right) } \exp\!\left\{\! -{\Tr}\!\left[\!\left(\hat{\mathbf{H}}\varSigma_{\mathbf{X}_d} {\hat{\mathbf{H}}}^H + \sigma_{n'}^2 \mathbf{I}_{N}\!\right)^{-1} \mathbf{Y}_{\rm com} \mathbf{Y}_{\rm com}^H\!\right] \!\right\}\!,\label{PDF_of_Pcom c}
\end{align} 
\end{subequations}
where~\eqref{PDF_of_Pcom a} is under the assumption that the columns of $\mathbf{Y}^d_{\rm com}$ (or $\mathbf{X}_d$) are i.i.d., and~\eqref{PDF_of_Pcom b} is from the PDF of circularly symmetric complex Gaussian distribution. The columns of equivalent noise matrix $\mathbf{N}'_c$ follow the i.i.d. multivariate complex Gaussian distribution with zero mean and covariance matrix $\sigma_n'^2\mathbf{I}_N$. By referring to~\eqref{eq::entropy_training} -- \eqref{eq::entropy_noise 1}, the entropy of $\mathbf{N}'_c$ can be given by
\begin{equation}\label{eq::entropy equivalent noise}
h\left( \mathbf{N}'_c\right)
\!=\! L_d N \log_2(\pi) \!+\! L_d N \!+\! L_d \log_2\!\left[\! \det\!\left(\sigma_{n'}^2\mathbf{I}_{N}\!\right)\!\right]\!.
\end{equation}
Therefore, the conditional MI between $\mathbf{X}_d$ and ${\mathbf{Y}}_{\rm com}^d$ is obtained as 
\begin{equation}\label{eq::MI communication}
\begin{aligned}
I\left( \mathbf{X}_d;{\mathbf{Y}}^d_{\rm com}|\hat{\mathbf{H}}\right)&=
h\left({\mathbf{Y}}^d_{\rm com}|{\hat{\mathbf{H}}} \right) -h\left({\mathbf{Y}}^d_{\rm com}|{\mathbf{X}_d},\hat{\mathbf{H}}\right) \\
& =h\left({\mathbf{Y}}^d_{\rm com}|{\hat{\mathbf{H}}} \right)  -h\left(\mathbf{N}'_c \right)
= L_d \log_2\!\left[\!\det\left(\frac{\hat{\mathbf{H}}\varSigma_{\mathbf{X}_d} {\hat{\mathbf{H}}}^H}{\sigma_{n'}^2}\!+\!\mathbf{I}_{N}\!\right) \!\right]\!.
\end{aligned}
\end{equation}

Compared to the conventional MI results without consideration of CEE~\cite{Cruz2010MIMO}, we can see that the CEE here contributes as
$\sigma_n'^2 = \frac{P_d}{L_d}\sigma_e^2+\sigma_n^2$.

\section{Channel Estimation Error and Optimal Power Allocation}\label{sec::cee}
In this section, we first derive a lower bound for CEE with the use of the training symbols. Based on this lower bound, we then propose an optimal scheme for allocating energy between training and data symbols, to maximize an upper bound of the MI for communications. We optimize the power allocation with respect to communication, as its impact on sensing performance is much weaker.

\subsection{Channel Estimation Error}\label{sec-cee}
With an MMSE MIMO channel estimation, the estimated MIMO channel matrix can be expressed as~\cite{Biguesh2006}
\begin{equation}
\begin{aligned}
\hat{\mathbf{H}} &= \mathbf{H}\mathbf{X}_t\mathbf{X}_t^H\left(\sigma_n^2\mathbf{I}_{N} +\mathbf{X}_t\mathbf{X}_t^H\right)^{-1} + \mathbf{N}_t\mathbf{X}_t^H\left(\sigma_n^2\mathbf{I}_{N} +\mathbf{X}_t\mathbf{X}_t^H\right)^{-1}
= \mathbf{H}- \Delta \mathbf{H},
\end{aligned}
\end{equation}
where, as can be recalled, $\mathbf{X}_t$ is an $N \times L_t$ training symbol matrix whose elements have the average energy $\sigma_t^2$. Take the singular value decomposition (SVD) of $\mathbf{R}_H$. $\mathbf{R}_H=\mathbf{U}_H\bm{\Lambda}_H\mathbf{U}_H^H$, where the singular value matrix $\bm{\Lambda}_H=\diag(\delta_1,\delta_2,\cdots,\delta_N)$, and $\frac{1}{N}\sum_{i=1}^{N}\delta_i =\frac{1}{N} {\Tr}(\mathbf{R}_H) \triangleq \sigma_h^2$. 
Let ${\bm{\Lambda}}_{\rm CRLB}$ be the Cram\'er-Rao lower bound (CRLB) of the channel matrix estimation~\cite{Berriche2004}. We have
\begin{equation}
\begin{aligned}
\mathbb{E}\left[\Delta \mathbf{H}\Delta \mathbf{H}^H \right]
& = \mathbb{E}\left[\left(\mathbf{H}-\hat{\mathbf{H}} \right)\left(\mathbf{H}-\hat{\mathbf{H}} \right)^H \right]\\
& \!\geq\!{\bm{\Lambda}}_{\rm CRLB} = \left(\frac{\varSigma_{\mathbf{X}_t}}{\sigma_n^2} \!s+\! \mathbf{R}_H^{-1} \right)^{-1}
 \!=\! \left(\frac{\mathbf{U}_H^H\varSigma_{\mathbf{X}_t}\mathbf{U}_H}{\sigma_n^2} +{\bm{\Lambda}_H}^{-1} \right)^{-1}\\
&\! =\! {\diag} \!\left(\!\frac{\sigma_n^2\delta_1}{\sigma_n^2\!+\!L_t \sigma_t^2\delta_1},\!\cdots\!,\frac{\sigma_n^2\delta_i}{\sigma_n^2\!+\!L_t \sigma_t^2\delta_i}, \!\cdots\!,\frac{\sigma_n^2\delta_N}{\sigma_n^2\!+\!L_t \sigma_t^2\delta_N}\!\right)\!,
\end{aligned}
\end{equation}
which is due to the fact that $\varSigma_{\mathbf{X}_t}= \mathbf{X}_t\mathbf{X}_t^H = L_t \sigma_t^2 \mathbf{I}_{N}$. 
Therefore, {a lower bound of MMSE of total channel estimates,} denoted by $\mathcal{C}_t$, can be represented as
\begin{equation}\label{eq::CRLB definition}
\mathcal{C}_t ={\Tr}({\bm{\Lambda}}_{\rm CRLB}) = \sum_{i=1}^{N} \frac{\sigma_n^2\delta_i}{\sigma_n^2+L_t \sigma_t^2\delta_i}.
\end{equation}
Here, $\mathcal{C}_t$ is a function of $\delta_i,\,i=1,\cdots,N$ with the constraint that $\sum_{i=1}^{N}\delta_i = {\Tr}(\mathbf{R}_H)$. Therefore, we can further obtain the lower bound of $\mathcal{C}_t$ by applying Lagrange multiplier method.
The Lagrangian function can be written as 
\begin{equation}
\begin{aligned}
\mathcal{L}\left({\bm{\Lambda}}_H\right) = \sum_{i=1}^{N} \frac{\sigma_n^2\delta_i}{\sigma_n^2+L_t \sigma_t^2\delta_i}
+\tau \left(\sum_{i=1}^{N}\delta_i - {\Tr}(\mathbf{R}_H) \right) ,
\end{aligned}
\end{equation}
where $\tau$ is the Lagrange multiplier. By solving $\frac{\partial \mathcal{L}\left({\bm{\Lambda}_H}\right)}{\partial \delta_i} =0$, we get
\begin{equation*}
	\frac{\sigma_n^4}{\left(\sigma_n^2+\delta_i L_t \sigma_t^2\right)^2 }+\tau=0,
\end{equation*}
which shows that the lower bound is achieved when $\delta_1=\cdots=\delta_i\cdots=\delta_N$ and $\delta_i = \frac{1}{N}{\Tr}(\mathbf{R}_H)=\frac{1}{N}\sum_{i=1}^{N}\delta_i { = \sigma_h^2},\,i=1,\cdots,N$. The lower bound of $\mathcal{C}_t$ is then given by
\begin{equation}\label{eq::CEE LB}
	\mathcal{C}_t \geq \frac{N\sigma_n^2{\sigma_h^2}}{\sigma_n^2+L_t \sigma_t^2{\sigma_h^2}}.
\end{equation} 
Therefore, for any diagonal element of $\bm{\Lambda}_{\rm CRLB} (i)$, we have
\begin{equation}
\label{eq-ce}
{\bm{\Lambda}}_{\rm CRLB} (i)
  \geq \frac{\sigma_n^2{\sigma_h^2}}{\sigma_n^2+L_t \sigma_t^2{\sigma_h^2}} \triangleq \mathcal{C}_e, \; i=1,\cdots,N.
\end{equation}

\subsection{Optimal Power Allocation of Training and Data Symbols}\label{subsec::Optimal Training Signals Design}
In general, there are some constraints on the maximum and average transmission powers of a transmitter. When such power constraints are applied, there is a motivation for optimizing the power allocation between the training and data symbols, especially for maximizing the MI for communications.
Here, we optimize power allocation only by referring to the communication MI, because its impact on communication MI is much stronger than on sensing MI. Larger CEE can cause substantially deteriorate communication performance while sensing MI can only be slightly affected since the training sequence is directly used for sensing.

Since $\hat{\mathbf{H}} = \mathbf{H} - \Delta \mathbf{H}$, we can obtain that $\hat{\mathbf{H}}$ is a random variable with zero mean and variance $\sigma^2_{\hat{\mathbf{H}}} = \frac{1}{N^2} \mathbb{E}\left[{\Tr}\{\hat{\mathbf{H}}{\hat{\mathbf{H}}}^H\}\right]$. According to the orthogonality principle for MMSE~\cite{Hassibi2003How} and the obtained lower bound of CEE, we have $\sigma^2_{\hat{\mathbf{H}}} = \sigma_h^2 - \sigma_e^2$. 
Therefore, the estimated channel $\hat{\mathbf{H}}$ can be normalized as $\tilde{\mathbf{H}} = \frac{1}{\sigma_{\hat{\mathbf{H}}}}\hat{\mathbf{H}}$, which has elements following i.i.d. Complex Gaussian distribution $\mathcal{CN}(0,1)$.

The MI in~\eqref{eq::MI communication} can then be rewritten as
\begin{subequations}\label{eq::MI communication nor}
	\begin{align}
	 I\!\left(\! \mathbf{X}_d;{\mathbf{Y}}^d_{\rm com}|\hat{\mathbf{H}}\!\right) &= L_d \log_2\!\left[\! \det\!\left(\!\frac{\sigma^2_{\hat{\mathbf{H}}}}{\sigma_{n'}^2}\tilde{\mathbf{H}}\varSigma_{\mathbf{X}_d} {\tilde{\mathbf{H}}}^H \!+\!\mathbf{I}_{N}\!\right)\! \!\right]\!\label{eq::MI communication nor a}\\
	& = L_d \log_2\left[ \det\left(\frac{\sigma_h^2 - \sigma_e^2}{\frac{P_d}{ L_d}\sigma_e^2+\sigma_{n}^2}\tilde{\mathbf{H}}\varSigma_{\mathbf{X}_d} {\tilde{\mathbf{H}}}^H+\mathbf{I}_{N}\right) \right]\label{eq::MI communication nor b}\\
	& \leq  L_d \log_2\left[ \det\left(\frac{\sigma_h^2 - \mathcal{C}_e}{\frac{P_d}{ L_d}\mathcal{C}_e+\sigma_{n}^2}\tilde{\mathbf{H}}\varSigma_{\mathbf{X}_d} {\tilde{\mathbf{H}}}^H+\mathbf{I}_{N}\right) \right],\label{eq::MI communication nor c}
	\end{align}	
\end{subequations}
where~\eqref{eq::MI communication nor c} is due to the lower bound of CEE, i.e., $\sigma_e^2 \geq \mathcal{C}^l_t$.

By substituting $\mathcal{C}_e$ of \eqref{eq-ce} into~\eqref{eq::MI communication nor} and exploiting the Jensen's inequality, we can obtain the results for optimal power allocation and minimized CEE as summarized in the following theorem.

\begin{Theo}[Optimal Power Allocation]\label{theo::op ene_allo}
The optimal power allocation for maximizing the channel capacity under the training symbols is given by
	\begin{equation}\label{eq::k op}
		\kappa_{\rm{op}} = \left\{
		\begin{aligned}
		&\Gamma+\sqrt{\Gamma(\Gamma-1)},\; &{\rm if}\; L_d < N;\\
		&\frac{1}{2},\; &{\rm if}\; L_d = N;\\
		&\Gamma-\sqrt{\Gamma(\Gamma-1)},\; &{\rm if}\; L_d > N,
		\end{aligned}
		\right.
	\end{equation}
where $\Gamma = \frac{L_d}{L_d-N}\left(1+\frac{N \sigma_n^2}{P\sigma_h^2}\right)$.

The lower bound of the CEE can be given by
\begin{equation}\label{eq::CRLB}
\mathcal{C}_t^{l} = \frac{N\sigma_n^2{\sigma_h^2}}{ N\sigma_n^2+\left(1-\kappa_{\rm{op}}\right)P{\sigma_h^2}}.
\end{equation}
\end{Theo}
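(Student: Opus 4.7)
The plan is to reduce the maximisation of the communication MI upper bound in \eqref{eq::MI communication nor c} to a one-dimensional problem in $\kappa$, and then solve it in closed form. First I would apply Jensen's inequality to bring the expectation inside $\log\det$: using that $\tilde{\mathbf{H}}$ has i.i.d.\ $\mathcal{CN}(0,1)$ entries and $\Tr(\varSigma_{\mathbf{X}_d}) = P_d/L_d$, one has $\mathbb{E}[\tilde{\mathbf{H}}\varSigma_{\mathbf{X}_d}\tilde{\mathbf{H}}^H] = (P_d/L_d)\mathbf{I}_N$, which reduces the ergodic version of \eqref{eq::MI communication nor c} to the scalar bound $L_d N\log_2(1+\gamma(\kappa))$ with effective SNR
\[
\gamma(\kappa) \;=\; \frac{(\sigma_h^2 - \mathcal{C}_e)\,P_d/L_d}{\mathcal{C}_e\,P_d/L_d + \sigma_n^2}.
\]
Since this bound is strictly monotone in $\gamma$, the problem collapses to maximising the scalar $\gamma(\kappa)$ over $\kappa\in(0,1)$.

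Next I would substitute $\mathcal{C}_e = \sigma_n^2\sigma_h^2/(\sigma_n^2 + L_t\sigma_t^2\sigma_h^2)$ from \eqref{eq-ce} together with $L_t\sigma_t^2 = (1-\kappa)P/N$ and $P_d = \kappa P$. Writing $a := P\sigma_h^2/N$ for brevity, elementary algebra puts $\gamma$ into the rational form
\[
\gamma(\kappa) \;=\; \frac{(N a^2/\sigma_n^2)\,\kappa(1-\kappa)}{D + E\kappa},\qquad D := L_d(a+\sigma_n^2)>0,\ \ E := a(N-L_d).
\]
Setting $d\gamma/d\kappa = 0$ and clearing the (positive) denominator yields the single stationary equation $E\kappa^2 + 2D\kappa - D = 0$. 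In the degenerate case $L_d = N$, one has $E=0$ and the equation is linear, giving $\kappa_{\rm op}=1/2$ directly. Otherwise the quadratic formula gives $\kappa = (-D\pm\sqrt{D(D+E)})/E$, and a short computation verifies the identities $-D/E = \Gamma$ and $D(D+E)/E^2 = \Gamma(\Gamma-1)$ for the $\Gamma$ defined in the theorem, so the two candidate critical points are $\Gamma \pm \sqrt{\Gamma(\Gamma-1)}$.

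To pick the correct root, note $\gamma(0)=\gamma(1)=0$ and $\gamma(\kappa)>0$ on $(0,1)$, so the maximiser is the unique critical point in $(0,1)$ lying on the same side as $(0,1)$ of the pole $\kappa = -D/E = \Gamma$. A sign analysis then finishes the case split: if $L_d<N$ then $E>0$ and $\Gamma<0$, so only the plus sign gives a positive root; if $L_d>N$ then $E<0$ and $\Gamma\ge 1$, so the plus sign would exceed $\Gamma$ and hence the pole, leaving only the minus sign admissible. This reproduces \eqref{eq::k op}. Finally I would substitute $\kappa_{\rm op}$ into the CEE lower bound \eqref{eq::CEE LB} via $L_t\sigma_t^2 = (1-\kappa_{\rm op})P/N$ to obtain \eqref{eq::CRLB}. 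The main obstacle is the root-selection step: one must carefully track the signs of $E$ and $\Gamma$ across the two regimes to confirm both feasibility and uniqueness of the maximiser, since the two algebraic roots of the quadratic are symmetric about $\Gamma$ but only one lies on the correct branch of the rational function in each case. The remaining algebra and the final back-substitution are routine.
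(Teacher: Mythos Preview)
Your proposal is correct and takes essentially the same route as the paper: Jensen-bound the ergodic MI, reduce to a one-dimensional effective-SNR maximisation in $\kappa$, and solve case-by-case. The paper reaches the same scalar SNR via an SVD/Hadamard diagonalisation followed by scalar Jensen (and then simply cites~\cite{Hassibi2003How} for the optimisation), whereas you apply matrix Jensen to $\log\det$ directly and carry out the quadratic and root selection explicitly---differences of presentation, not substance.
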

\begin{proof}
	The proof is provided in Appendix~\ref{proof::theo 1}.
\end{proof}

\begin{Corol}
	In a high SNR regime, $\Gamma$ and $\kappa_{\rm{op}}$ can be approximated as	
	$$\Gamma \approx \frac{L_d}{L_d-N};\;\kappa_{\rm{op}} \approx \frac{\sqrt{L_d}}{\sqrt{L_d}+\sqrt{N}}.$$
	In this case, the SNR is approximately $\rho_{\max}=\frac{{L_d}}{(\sqrt{L_d}+\sqrt{N})^2}\frac{P}{N\sigma_n^2}$.
	We can find that, in the high SNR regime, the power allocation between the data and the training symbols depends on the number of the data symbols, $L_d$, and the number of antennas, $N$. Moreover, $\kappa_{\rm{op}}$ decreases with the growth of both $L_d$ and $N$.
	
	In a low SNR regime, $\Gamma$ and $\kappa_{\rm{op}}$ can be approximated as
	$$\Gamma \approx \frac{L_dN\sigma_n^2}{(L_d-N)P\sigma_h^2};\;\kappa_{\rm{op}} \approx \frac{1}{2}.$$
	In this case, the SNR is approximately $\rho_{\max}=\frac{P^2 \sigma_h^4}{4N^2\sigma_n^4}$. We find that half of the total energy are allocated to the training symbols, and the maximum SNR in the low SNR regime quadratically increases with $\frac{P}{\sigma_{n}^2}$.
\end{Corol}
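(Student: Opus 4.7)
The plan is to specialize Theorem~1 to the two asymptotic regimes by treating $\eta := N\sigma_n^2/(P\sigma_h^2)$ as the small (high-SNR) or large (low-SNR) parameter, and letting $\eta\to 0$ and $\eta\to\infty$ in turn. In both cases I would track how the limit propagates through the closed form $\Gamma = \frac{L_d}{L_d-N}(1 + \eta)$, then through $\kappa_{\rm op}=\Gamma - \sqrt{\Gamma(\Gamma-1)}$ (the practically relevant branch $L_d>N$), and finally through the effective-SNR expression obtained by substituting $\kappa_{\rm op}$ into the upper bound \eqref{eq::MI communication nor}. The other two branches of \eqref{eq::k op} follow by the same algebra and only warrant a one-line remark.

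For the high-SNR regime the first step is immediate: as $\eta\to 0$, $\Gamma\to L_d/(L_d-N)$. The second step is algebraic: $\Gamma-1 = N/(L_d-N)$ gives $\sqrt{\Gamma(\Gamma-1)} = \sqrt{L_dN}/(L_d-N)$, and the factorization $L_d-N = (\sqrt{L_d}-\sqrt{N})(\sqrt{L_d}+\sqrt{N})$ reduces
\[
\kappa_{\rm op} \approx \frac{L_d-\sqrt{L_dN}}{L_d-N} = \frac{\sqrt{L_d}(\sqrt{L_d}-\sqrt{N})}{(\sqrt{L_d}-\sqrt{N})(\sqrt{L_d}+\sqrt{N})} = \frac{\sqrt{L_d}}{\sqrt{L_d}+\sqrt{N}}.
\]
Inserting this $\kappa_{\rm op}$ and its complement $1-\kappa_{\rm op}=\sqrt{N}/(\sqrt{L_d}+\sqrt{N})$ into the effective SNR $(\sigma_h^2-\mathcal{C}_e)/(P_d\mathcal{C}_e/L_d+\sigma_n^2)$ read off from \eqref{eq::MI communication nor}, and dropping the $L_dN\sigma_n^2$ term that is negligible compared with the $P\sigma_h^2$ contributions, makes the numerator and denominator collapse via $\kappa_{\rm op}N + L_d(1-\kappa_{\rm op}) = \sqrt{L_dN}$, yielding the stated $\rho_{\max} = \tfrac{L_d}{(\sqrt{L_d}+\sqrt{N})^2}\tfrac{P}{N\sigma_n^2}$. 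Rewriting $\kappa_{\rm op}$ as $1/(1+\sqrt{N/L_d})$ then makes the monotone dependence on $L_d$ and $N$ transparent for the ensuing discussion.

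For the low-SNR regime $\eta\to\infty$, so $\eta$ dominates $1+\eta$ and $\Gamma \approx \frac{L_d}{L_d-N}\,\eta = \frac{L_dN\sigma_n^2}{(L_d-N)P\sigma_h^2}$, as claimed. The delicate step is approximating $\kappa_{\rm op}$: since $\Gamma\to\infty$, both $\Gamma$ and $\sqrt{\Gamma(\Gamma-1)}$ blow up, so I would Taylor-expand $\sqrt{\Gamma(\Gamma-1)} = \Gamma\sqrt{1-1/\Gamma} = \Gamma - \tfrac{1}{2} - O(1/\Gamma)$, giving $\kappa_{\rm op}\to 1/2$. Substituting $\kappa_{\rm op}=1/2$ back into $\rho_{\max}$ and discarding the $P\sigma_h^2$ terms (now dwarfed by $N\sigma_n^2$) yields $\rho_{\max}\approx P^2\sigma_h^4/(4N^2\sigma_n^4)$, whose quadratic scaling with $P/\sigma_n^2$ is the characteristic training-limited low-SNR behaviour.

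The only non-routine step is the low-SNR asymptotic for $\kappa_{\rm op}$, precisely because it is the difference of two unbounded quantities: a naive $\Gamma-1\approx\Gamma$ substitution would incorrectly collapse $\kappa_{\rm op}$ to $0$ instead of $1/2$, so one must retain the $-\tfrac{1}{2}$ correction in the square-root expansion. All remaining work is substitution and simplification of expressions already established in Theorem~1 and Section~\ref{sec::cee}, and I foresee no deeper obstacle.
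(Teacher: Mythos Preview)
Your proposal is correct and follows exactly the natural route: specialise Theorem~\ref{theo::op ene_allo} by sending $\eta=N\sigma_n^2/(P\sigma_h^2)$ to $0$ or $\infty$, propagate the limit through $\Gamma$, then through $\kappa_{\rm op}$, and finally through the effective SNR. The paper itself states the Corollary without proof, treating it as an immediate consequence of Theorem~\ref{theo::op ene_allo} and the $\rho_{\max}$ formulas derived in Appendix~\ref{proof::theo 1}; your derivation simply fills in the elementary asymptotics that the authors left implicit, and your warning about the low-SNR cancellation $\Gamma-\sqrt{\Gamma(\Gamma-1)}\to\tfrac12$ is precisely the one subtlety worth flagging.
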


Hereafter, we use the optimal power allocation formula~\eqref{eq::k op} for power allocation and let $\mathcal{C}^{l}_e= \frac{\mathcal{C}_t^{l}}{N} = \frac{\sigma_n^2{\sigma_h^2}}{ N\sigma_n^2+\left(1-\kappa_{\rm{op}}\right)P{\sigma_h^2}}$; unless otherwise stated.

\section{Optimal Waveform Design} \label{sec::waveform design}
With the optimized power allocation in Section~\ref{sec::cee}, we investigate the waveform design for three scenarios in this section, which are maximizing the MI for sensing only and for communications only, and maximizing a weighted relative MI jointly for communication and radio sensing.

\subsection{Optimal Waveform Design for Radio Sensing Only}\label{sec::maximum MI}
In order to achieve the maximum MI for sensing, or in other words, to make received signals~${\mathbf{Y}}_{\rm rad}$ (including ${\mathbf{Y}}^t_{\rm rad}$ and ${\mathbf{Y}}^d_{\rm rad}$) containing rich information about $\mathbf{G}$, the transmit signals~$\mathbf{X}$ (including the training sequence~$\mathbf{X}_t$ and data sequence~$\mathbf{X}_d$) should be designed according to the sensing channel matrix $\mathbf{G}$. 
Since the training sequence $\mathbf{X}_t$ and the data sequence $\mathbf{X}_d$ are independent and have different correlations, 
the optimization problem for maximizing the MI for sensing can be decoupled into two separate optimization problems. 
As assumed, $\mathbf{X}_t$ contains deterministic orthogonal rows and {$ \mathbf{X}_t\mathbf{X}_t^H = L_t \sigma_t^2 \mathbf{I}_{N}$}. We only need to consider the optimization problem for the data sequence.

For the data sequence, the spatial correlation matrix can be diagonalized through SVD, that is,
\begin{equation}\label{eq::Sigma_G}
\varSigma_{\mathbf{G}}=\frac{1}{N}\mathbb{E}\{\mathbf{G}\mathbf{G}^H\}=\mathbf{U}_G\mathbf{\Lambda}_G\mathbf{U}_G^H,
\end{equation}
where $\mathbf{U}_G$ is a unitary matrix and 
$\mathbf{\Lambda}_G={\diag}\left\{ \lambda_{11},\cdots,\lambda_{ii},\cdots,\lambda_{NN}\right\}$
is a diagonal matrix with~$\lambda_{ii}$ being the singular values. 
The mean channel gain $\sigma_g^2$ for sensing channels is $\sigma_g^2= \sum_{i=1} \lambda_{ii}$.
The MI in~\eqref{eq::MI_sensing} can be rewritten as~\eqref{eq::MI_sensing2}, 
\begin{subequations}\label{eq::MI_sensing2}
		\begin{align}
		I\left( \mathbf{G};{\mathbf{Y}}_{\rm rad}|\mathbf{X}\right)	
		& = N \log_2\left[ \det\left(\frac{\mathbf{X}^{\ast}_t\mathbf{X}_t^T\mathbf{U}_G\mathbf{\Lambda}_G\mathbf{U}_G^H+\mathbf{X}^{\ast}_d\mathbf{X}_d^T\mathbf{U}_G\mathbf{\Lambda}_G\mathbf{U}_G^H }{\left(\sigma_n^2\right) ^{\frac{L}{N}}}+\mathbf{I}_{N}\right)\right]  \label{eq::MI_sensing2 a}\\
		& = N \log_2\left[ \det\left(\frac{\mathbf{\Lambda}_G\left(\mathbf{X}^T_t\mathbf{U}_G \right)^H \mathbf{X}^T_t\mathbf{U}_G\!+\!\mathbf{\Lambda}_G\left(\mathbf{X}^T_d\mathbf{U}_G \right)^H \mathbf{X}^T_d\mathbf{U}_G}{\left(\sigma_n^2\right) ^{\frac{L}{N}}}+\mathbf{I}_{N}\right)\!\right]\!,\label{eq::MI_sensing2 b}
		\end{align}
\end{subequations}
where \eqref{eq::MI_sensing2 b} is based on \textit{Sylvester's determinant} theorem.
Define $\mathbf{Q}^{(t)}=\left(\mathbf{X}^T_t\mathbf{U}_G \right)^H \mathbf{X}^T_t\mathbf{U}_G$ and $\mathbf{Q}^{(d)}=\left(\mathbf{X}^T_d\mathbf{U}_G \right)^H \mathbf{X}^T_d\mathbf{U}_G$, and their $(i,j)$-th entries are $q^{(t)}_{ij}$ and $q^{(d)}_{ij}$, respectively. 

According to Hadamard's inequality for the determinant and trace of an $N \times N$ positive semi-definite Hermitian matrix, we have the following inequalities: 
\begin{equation*}\label{eq::Hadamard inequality 1}
\det\left(\mathbf{Q}^{(t)}_{N \times N} \right) \leq \prod_{i=1}^{N} q^{(t)}_{ii};\;
\det\left(\mathbf{Q}^{(d)}_{N \times N} \right) \leq \prod_{i=1}^{N} q^{(d)}_{ii},
\end{equation*}
and
\begin{equation*}\label{eq::Hadamard inequality 3}
\Tr\left({\mathbf{Q}^{(t)}_{N \times N}}^{-1} \right) \geq \prod_{i=1}^{N}\frac{1}{q^{(t)}_{ii}};\;
\Tr\left({\mathbf{Q}^{(d)}_{N \times N}}^{-1} \right) \geq \prod_{i=1}^{N}\frac{1}{q^{(d)}_{ii}},
\end{equation*}
where the equalities are achieved if and only if $\mathbf{Q}_{N \times N}$ is diagonal.
As a result, the MI can be rewritten as
\begin{equation}\label{eq::MI_sensing3}
I\left( \mathbf{G};{\mathbf{Y}}_{\rm rad}|\mathbf{X}\right) \!\leq\! N \log_2\left[ \prod_{i=1}^{N}\left(\frac{\lambda_{ii}(q^{(t)}_{ii}\!+\!q^{(d)}_{ii})}{\left(\sigma_n^2\right) ^{\frac{L}{N}}}\!+\!1\right) \right].
\end{equation}

Since $\mathbf{U}_G$ is unitary, we can find $\Tr\left( \mathbf{Q}^{(d)}\right)=\Tr\left(\left(\mathbf{X}^T_d\mathbf{U}_G \right)^H \mathbf{X}^T_d\mathbf{U}_G \right)= \Tr\left(\mathbf{Z}^H\mathbf{Z} \right)$, where $\mathbf{Z}=\mathbf{X}^T_d\mathbf{U}_G$ is an $L_d \times N$ matrix. 
Under the constraint that the total transmit power is finite, we have $\Tr\left( \mathbf{Q}^{(d)}\right)  \leq P_d=\kappa_{\rm{op}}{P}$, where $P_d$ is the total power of the transmit data signals. Since $\mathbf{X}_t$ satisfies the orthogonality condition, we have $\Tr\left( \mathbf{Q}^{(t)}\right)=\Tr\left(\left(\mathbf{X}^T_t\mathbf{U}_G \right)^H \mathbf{X}^T_t\mathbf{U}_G \right)= \Tr\left(\mathbf{X}^{\ast}_t \mathbf{X}^t_t \right)= \Tr\left(\mathbf{X}_t \mathbf{X}^H_t \right)=P_t=(1-\kappa_{\rm{op}})P$ and $q^{(t)}_{ii}=\frac{P_t}{N}=\frac{(1-\kappa_{\rm{op}})P}{N}$.
Therefore, the maximum MI can be obtained by solving the following constrained problem:
\begin{equation}\label{eq::MI maximum}
\begin{aligned}
F_r 
&\!=\! \underset{\mathbf{Q}^{(d)}}{\max} \sum_{i=1}^{N} \left\lbrace\log_2\left(\frac{\lambda_{ii} \left(\frac{P_t}{N}+q^{(d)}_{ii}\right) }{\left(\sigma_n^2\right) ^{\frac{L}{N}}}+1\right)\right\rbrace,\\
\quad&\text{subject to} \;\; \Tr\left( \mathbf{Q}^{(d)}\right)  \leq P_d;\\
&\qquad\quad\quad q^{(d)}_{ii}\geq 0,\;1\leq i \leq N.
\end{aligned}
\end{equation}
We can apply the Lagrange multiplier method to solve \eqref{eq::MI maximum}. The Lagrangian can be written as
\begin{equation}
\mathcal{L}\!\left(\mathbf{Q}^{(d)}\!\right) \!=\!\sum_{i=1}^{N} \!\left\lbrace\!\log_2\!\left(\!\frac{\lambda_{ii} \!\left(\!\frac{P_t}{N}\!+\!q^{(d)}_{ii}\!\right)\! }{\left(\sigma_n^2\right) ^{\frac{L}{N}}}\!+\!1\!\right)\!\!\right\rbrace
\!+\alpha \sum_{i=1}^{N}
q^{(d)}_{ii},
\end{equation}
where $\alpha$ is the Lagrange multiplier associated with $q^{(d)}_{ii},\,i=1,\cdots,N$.
Differentiating $\mathcal{L}(\mathbf{Q}^{(d)})$ with respect to $q^{(d)}_{ii}$, $i=1,\cdots,N$, and setting the first-order derivative as 0, we can obtain $q^{(d)}_{ii}$ as follows:
\begin{equation}
\begin{aligned}
q^{(d)}_{ii}
&=-\frac{\left(\sigma_n^2\right) ^{\frac{L}{N}}}{\alpha\ln2}-\frac{P_t}{N}-\frac{\left(\sigma_n^2\right) ^{\frac{L}{N}}}{\lambda_{ii}}.
\end{aligned}
\end{equation}
The optimality conditions are satisfied if
\begin{equation}
\sum_{i=1}^{N}\left( -\frac{\left(\sigma_n^2\right) ^{\frac{L}{N}}}{\alpha\ln2}-\frac{P_t}{N}-\frac{\left(\sigma_n^2\right) ^{\frac{L}{N}}}{\lambda_{ii}}\right)^{+} = \kappa_{\rm{op}}{P}
\end{equation}
holds, and
\begin{equation}\label{eq::op power radar MI}
q^{(d)}_{ii} =\left( -\frac{\left(\sigma_n^2\right) ^{\frac{L}{N}}}{\alpha\ln2}-\frac{P_t}{N}-\frac{\left(\sigma_n^2\right) ^{\frac{L}{N}}}{\lambda_{ii}}\right)^{+}, i=1,\cdots,N.
\end{equation}

Since the diagonal elements of $\mathbf{Q}^{(d)}$ are real and greater than~$0$, ${\mathbf{Q}^{(d)}}^{\frac{1}{2}}$ exists. For any $L_d \times N$ matrix $\mathbf{\Psi}$ with orthogonal columns, if $\mathbf{\Psi}^H\mathbf{\Psi}=\mathbf{I}_{N}$, we have $\mathbf{Z}=\mathbf{\Psi}{\mathbf{Q}^{(d)}}^{\frac{1}{2}}$.
Since $\mathbf{Z}=\mathbf{X}^H_d\mathbf{U}_G$, $\mathbf{X}_d$ can be obtained as 
\begin{equation}\label{eq-xdsen}
\mathbf{X}_d=\left(\mathbf{\Psi}{\mathbf{Q}^{(d)}}^{\frac{1}{2}}\mathbf{U}_G^H\right)^H.
\end{equation}

With the obtained optimal $\mathbf{X}_d$ for sensing, we can derive the corresponding communication MI which is not necessarily optimal, as given by 
\begin{subequations}\label{eq::op capcity with maximum radar MI}
\begin{align}
I\left(\! \mathbf{X}_d;{\mathbf{Y}}^d_{\rm com}|\hat{\mathbf{H}}\!\right) &\leq  
L_d \log_2\!\left[\!\det\!\left(\!\frac{(\sigma_h^2 \!-\! \mathcal{C}^{l}_e)\tilde{\mathbf{H}}\varSigma_{\mathbf{X}_d} {\tilde{\mathbf{H}}}^H}{\frac{P_d}{ L_d}\mathcal{C}^{l}_e+\sigma_{n}^2}\!+\!\mathbf{I}_{N}\!\right)\! \!\right]\!\\
&\!=\! L_d \log_2\!\left[\! \det\!\left(\!\frac{(\sigma_h^2 - \mathcal{C}^{l}_e){\mathbf{U}_G \mathbf{Q}^{(d)}\mathbf{U}_G^H \tilde{\mathbf{H}}^H\tilde{\mathbf{H}}}}{\frac{P_d}{ L_d}\mathcal{C}^{l}_e+\sigma_{n}^2}\!+\!\mathbf{I}_{N}\!\right)\! \!\right]\!,\label{eq::op capcity with maximum radar MI b}
\end{align}
\end{subequations}
where~\eqref{eq::op capcity with maximum radar MI b} is obtained due to $\varSigma_{\mathbf{X}_d}=\frac{1}{L_d}\mathbb{E}\left\{\mathbf{X}_d\mathbf{X}_d^H\right\}=\left( \mathbf{\Psi}{\mathbf{Q}^{(d)}}^{\frac{1}{2}}\mathbf{U}_G^H\right)^H\mathbf{\Psi}{\mathbf{Q}^{(d)}}^{\frac{1}{2}}\mathbf{U}_G^H= \mathbf{U}_G \mathbf{Q}^{(d)}\mathbf{U}_G^H$.

\subsection{Optimal Waveform Design for Communication Only}\label{sec::maximum channel capacity}

After obtaining the optimal power allocation between the training and data symbols for maximizing the channel capacity in the presence of CEE, as shown in Theorem~\ref{theo::op ene_allo}, we can design waveform based on the optimal power allocation and correlated channel matrix $\mathbf{H}$ to maximize the MI  for communications (i.e., channel capacity). 
As derived in Appendix~\ref{proof::theo 1}, the MI maximization problem can be formulated to maximize the upper bound of the MI, as given by 
\begin{equation}\label{eq::op_communication with ce}
\begin{aligned}
F_c
&=\max L_d \sum_{i=1}^{N} \left\lbrace\log_2\left[\frac{\left(\sigma_h^2 - \mathcal{C}^{l}_e\right)\mu_{ii}\xi_{ii} }{\frac{P_d}{ L_d}\mathcal{C}^{l}_e+\sigma_n^2}+1\right] \right\rbrace\\
\quad&\text{subject to} \;\; \Tr\left( \mathbf{\Xi}\right) \leq \kappa_{\rm{op}}{P};\\
&\qquad\quad\quad\xi_{ii}\geq 0,\;1\leq i \leq N.
\end{aligned}
\end{equation}

The optimal solutions for this problem are satisfied if
\begin{equation}
\sum_{i=1}^{N}\left[ -\frac{1}{\beta'\ln2}-\frac{\sigma_n^2+\frac{P_d}{ L_d}\mathcal{C}^{l}_e}{ \mu_{ii}\left(\sigma_h^2 - \mathcal{C}^{l}_e\right)}\right]^{+} = \kappa_{\rm{op}}{P}
\end{equation}
holds, where $\beta'$ is the the Lagrange multiplier. The optimal singular values $\xi_{ii}$ of the correlation matrix $\mathbf{\Xi}$ can be obtained as 
\begin{equation}\label{eq::op power channel capacity}
\xi_{ii}=\left[ -\frac{1}{\beta'\ln2}-\frac{\sigma_n^2+\frac{P_d}{ L_d}\mathcal{C}^{l}_e}{ \mu_{ii}\left(\sigma_h^2 - \mathcal{C}^{l}_e\right)}\right]^{+}, \; i=1,\cdots,N.
\end{equation}

Since the diagonal elements of $\mathbf{\Xi}$ are real and positive, ${\mathbf{\Xi}}^{\frac{1}{2}}$ exists. For any $L_d\times N$ matrix $\mathbf{\Theta}$ with orthonormal columns, if $\mathbf{\Theta}^H\mathbf{\Theta}=\mathbf{I}_{N}$, we have $\mathbf{Y}^d_{\rm com}=\mathbf{\Theta}{\mathbf{\Xi}}^{\frac{1}{2}}$.
Since $\mathbf{Y}^d_{\rm com}=\mathbf{X}_d\mathbf{U}_{\tilde{\mathbf{H}}}$, $\mathbf{X}_d$ can be obtained as 
\begin{equation}
\mathbf{X}_d=\left(\mathbf{\Theta}{\mathbf{\Xi}}^{\frac{1}{2}}\mathbf{U}_{\tilde{\mathbf{H}}}^H\right)^H.
\end{equation}

Based on the optimal singular values of the covariance matrix for the data signals in~\eqref{eq::op power channel capacity}, we can obtain the MI for sensing under the condition of the maximum MI for communications, as given by
\begin{equation}\label{eq::op radar MI with maximum capcity}
	\begin{aligned}
	&I\left( \mathbf{G};{\mathbf{Y}}_{\rm rad}|\mathbf{X}\right)
	 = N \log_2\!\left[ \det\left(\!\frac{\left(\mathbf{X}^T_t\mathbf{U}_G \right)^H \mathbf{X}^T_t\mathbf{U}_G\!+\!\mathbf{U}_{\tilde{\mathbf{H}}} \mathbf{\Xi}\mathbf{U}_{\tilde{\mathbf{H}}}^H \varSigma_\mathbf{G}}{\left(\sigma_n^2\right) ^{\frac{L}{N}}}\!+\!\mathbf{I}_{N}\!\right)\!\right]\!,
	\end{aligned}
\end{equation}
\eqref{eq::op radar MI with maximum capcity} is obtained due to $\varSigma_{\mathbf{X}_d}=\frac{1}{L_d}\mathbb{E}\left\{\mathbf{X}^{\ast}_d\mathbf{X}_d^T\right\}=\frac{1}{L_d}\mathbb{E}\left\{\mathbf{X}_d\mathbf{X}_d^H\right\}=\left( \mathbf{\Theta}{\mathbf{\Xi}}^{\frac{1}{2}}\mathbf{U}_{\tilde{\mathbf{H}}}^H\right)^H\mathbf{\Theta}{\mathbf{\Xi}}^{\frac{1}{2}}\mathbf{U}_{\tilde{\mathbf{H}}}^H= \mathbf{U}_{\tilde{\mathbf{H}}} \mathbf{\Xi}\mathbf{U}_{\tilde{\mathbf{H}}}^H$.

\subsection{Joint Maximization of a Weighted Sum of MI}
In this section, we conduct the waveform optimization for jointly considering the MI for both communication and radio sensing. Since there is generally no solution that can simultaneously maximize the MI for communication and radio sensing, a weighted sum of them is exploited and given by
\begin{equation}\label{eq::weighted cost function 1}
	 F_w \!=\! \frac{w_r}{F_r}I\left( \mathbf{G};{\mathbf{Y}}_{\rm rad}|\mathbf{X}\right) + \frac{1-w_r}{F_c}I\left( \mathbf{X}_d;{\mathbf{Y}}^d_{\rm com}|\hat{\mathbf{H}}\right).	
\end{equation}

To maximize the weighted sum, the transmitted data signals $\mathbf{X}_d$ should be designed according to the correlation matrices of both $\mathbf{H}$ and $\mathbf{G}$. Based on the SVD, $\varSigma_{\mathbf{H}}=\frac{1}{N}\mathbb{E}\{\mathbf{H}\mathbf{H}^H\}=\mathbf{U}_H\mathbf{\Lambda}_H\mathbf{U}_H^H$ and $\varSigma_{\mathbf{G}}=\frac{1}{N}\mathbb{E}\{\mathbf{G}\mathbf{G}^H\}=\mathbf{U}_G\mathbf{\Lambda}_G\mathbf{U}_G^H$,~\eqref{eq::weighted cost function 1} can be rewritten as
\begin{subequations}\label{eq::weighted cost function 2}
\begin{align}
F_w 
& = \frac{w_r N}{F_r} \log_2\left[ \det\left(\frac{\mathbf{X}^{\ast}_t\mathbf{X}_t^T\varSigma_{\mathbf{G}} +\mathbf{X}^{\ast}_d\mathbf{X}_d^T\varSigma_\mathbf{G} }{\left(\sigma_n^2\right) ^{\frac{L}{N}}}+\mathbf{I}_{N}\right)\right] \\
&\qquad\qquad\qquad+\frac{(1-w_r)L_d}{F_c} \log_2\left[ \det\left(\frac{(\sigma_h^2 - \mathcal{C}^{l}_e)\tilde{\mathbf{H}}\varSigma_{\mathbf{X}_d} {\tilde{\mathbf{H}}}^H}{\frac{P_d}{ L_d}\mathcal{C}^{l}_e+\sigma_n^2}+\mathbf{I}_{N}\right) \right]\label{eq::weighted cost function 2a}\\
& = \frac{w_r N}{F_r}\log_2\left[ \det\left(\frac{\mathbf{\Lambda}_G\left(\mathbf{X}^T_t\mathbf{U}_G \right)^H \mathbf{X}^T_t\mathbf{U}_G+\mathbf{\Lambda}_G\left(\mathbf{X}^T_d\mathbf{U}_G \right)^H \mathbf{X}^T_d\mathbf{U}_G}{\left( \sigma_n^2\right) ^{\frac{L}{N}}}+\mathbf{I}_{N}\right)\right] \notag\\
&\qquad\qquad\qquad+ \frac{(1-w_r)L_d}{F_c} \log_2\left[ \det\left(\frac{(\sigma_h^2 - \mathcal{C}^{l}_e)\mathbf{\Lambda}_{\tilde{\mathbf{H}}}\left(\mathbf{X}_d\mathbf{U}_{\tilde{\mathbf{H}}} \right)^H \mathbf{X}_d\mathbf{U}_{\tilde{\mathbf{H}}}}{\frac{P_d}{ L_d}\mathcal{C}^{l}_e+\sigma_n^2}+\mathbf{I}_{N}\right) \right]\label{eq::weighted cost function 2b}.
\end{align}	
\end{subequations}

Define $\mathbf{\Pi} = (\mathbf{X}^T_d\mathbf{U}_H)^H\mathbf{X}^T_d\mathbf{U}_H = (\mathbf{X}^T_d\mathbf{U}_G)^H\mathbf{X}^T_d\mathbf{U}_G$ with the $(i,j)$-th entry $\varpi_{ij}$, and $\rm{tr}\left(\mathbf{\Pi}\right) = \rm{tr}\left(\varSigma_{\mathbf{X}_d}\right)$.
According to Hadamard's inequality for the determinant and trace of a positive semi-definite Hermitian matrix, we have $\det\left(\mathbf{\Pi}_{N \times N} \right) \leq \prod_{i=1}^{N} \varpi_{ii}$.
We can formulate the MI maximization problem as 
\begin{subequations}\label{eq::op_crb2}
\begin{align}
F_w &\leq \underset{\mathbf{\Pi}}{\max}\sum_{i=1}^{N}\left\lbrace \frac{w_r}{F_r}N\! \log_2\!\left(\!\frac{\lambda_{ii}\left(\frac{P_t}{N}\!+\!\varpi_{ii}\right)}{\left(\sigma_n^2\right) ^{\frac{L}{N}}}\!+\!1\!\right) \!+\!\frac{{1-w_r}}{F_c}L_d\log_2\left(\frac{\left(\sigma_h^2 - \mathcal{C}_e^{l}\right)\mu_{ii}\varpi_{ii}}{\frac{P_d}{L_d}\mathcal{C}_e^{l}\!+\!\sigma_n^2}\!+\!1\!\right)\!\!\right\rbrace\! \label{eq::op_crbla}\\
\quad&\text{subject to} \;\; \Tr\left( \mathbf{\Pi}\right)  \leq P_d;\\
&\qquad\quad\quad\varpi_{ii}\geq 0,\;1\leq i \leq N,\label{eq::op_crblc}
\end{align}
\end{subequations}
where $F_r$ and $F_c$ are the maximum MI~\eqref{eq::MI maximum} in Section~\ref{sec::maximum MI} and the communication capacity~\eqref{eq::op_communication with ce} in Section~\ref{sec::maximum channel capacity}, respectively.

The objective function in \eqref{eq::op_crbla} is concave, since it is a non-negative weighted sum of two concave functions of $\varpi_{ii}$, i.e., $$N\! \log_2\!\left(\!\frac{\lambda_{ii}\left(\frac{P_t}{N}\!+\!\varpi_{ii}\right)}{\left(\sigma_n^2\right) ^{\frac{L}{N}}}\!+\!1\!\right)\!;\;{\rm and}\; L_d\log_2\!\left(\!\frac{\left(\sigma_h^2 \!-\! \mathcal{C}_e^{l}\right)\mu_{ii}\varpi_{ii}}{\frac{P_d}{L_d}\mathcal{C}_e^{l}\!+\!\sigma_n^2}\!+\!1\!\right)\!.$$ 
Besides, the functions $\Tr\left( \mathbf{\Pi}\right) \leq P_d$ and $\varpi_{ii}\geq 0,\;1\leq i \leq N$ are affine.
Therefore, the maximization of the concave problem in \eqref{eq::op_crb2} can be reformulated equivalently to minimize the convex objective. The optimization problem can be solved by using Karush-Kuhn-Tucker (KKT) conditions. Let
\begin{align}
&\nu_i=\frac{\lambda_{ii}}{\left(\sigma_n^2\right)^{\frac{L}{N}}},\ \varphi_i=\frac{\left(\sigma_h^2 - \mathcal{C}_e^{l}\right)\mu_{ii}}{\frac{P_d}{ L_d}\mathcal{C}_e^{l}+\sigma_n^2},
\epsilon=\frac{w_r N}{\ln2 F_r}, \ \text{and}\ \eta=\frac{\left(1-w_r \right)  L_d}{\ln2 F_c},\notag
\end{align}
we have
\begin{subequations}\label{eq::KKT1}
	\begin{align}
	&\zeta - \zeta_i=\frac{\epsilon \nu_i}{1+\nu_i (\frac{P_t}{N}+\varpi_{ii})}+\frac{\eta \varphi_i}{1+\varphi_i \varpi_{ii}};\\
	&\zeta\left(\sum_{i=1}^{N} \varpi_{ii}-P_d\right)=0;\\
	&\zeta_i \varpi_{ii}=0;\\
	&\zeta\geq 0;\; \zeta_i\geq 0,i=1,\cdots,N,
	\end{align}
\end{subequations}
where $\zeta$ and $\zeta_i,\;i=1,\cdots,N$, are the Lagrange multipliers. The optimal solution for~\eqref{eq::KKT1} is given by
\begin{equation}
\begin{aligned}
\hat{\varpi}_{ii} &=\frac{1}{2}\Bigg[\frac{1}{\zeta}\left( \epsilon+\eta\right)-\left(\frac{P_t}{N}+\frac{1}{\nu_i} +\frac{1}{\varphi_i}\right)s+\sqrt{\!\left[\!\left(\frac{P_t}{N}+\frac{1}{\nu_i}-\frac{1}{\varphi_i}\right)+\frac{1}{\zeta}\left(\eta-\epsilon\right) \!\right]^2 +\frac{4\epsilon \eta}{\zeta^2}}\Bigg]^{+},
\end{aligned} 
\end{equation}
where $\left[x\right]^+=\max\left\lbrace x,0\right\rbrace$, and $\hat{\varpi}_{ii},\;i=1,\cdots,N$ satisfy the following equality:
\begin{equation}
\sum_{i=1}^{N} \hat{\varpi}_{ii}-P_d=0.
\end{equation}
The positive $\zeta$ can be obtained by the bisection search over the following interval:
$$0<\frac{1}{\zeta}< \frac{1}{\underset{i}{\min}\left\lbrace \frac{\epsilon \nu_i}{(\frac{P_t}{N}+P_d)\nu_i+1}+\frac{\eta \varphi_i}{\varphi_i P_d+1}\right\rbrace },\; i=1,\cdots,N.$$ 
Once $\zeta$ is obtained, the optimal covariance matrix of the data signals in \eqref{eq::op_crb2} can be obtained. 
Further, we can obtain the maximum relative MI, that is, the sum of the relative communication MI and the relative sensing MI, as given by
\begin{equation}\label{eq::total relative rate}
	\begin{aligned}
	R_{\rm total} = & \sum_{i=1}^{N}\left\lbrace \frac{N}{F_r} \log_2\left(\frac{\lambda_{ii}(\frac{P_t}{N}+\hat{\varpi}_{ii}) }{\left(\sigma_n^2\right)^{\frac{L}{N}}}+1\right)+\frac{{L_d}}{F_c}\log_2\left[\frac{(\sigma_h^2 - \mathcal{C}_e^{l})\mu_{ii}\hat{\varpi}_{ii}}{\frac{P_d}{ L_d}\mathcal{C}_e^{l}+\sigma_n^2}+1\right]\right\rbrace.
	\end{aligned}
\end{equation}

In the case of $w_r=0$, the singular values $\hat{\varpi}_{ii},\,i=1,\cdots,N,$ of the optimal covariance matrix are consistent with \eqref{eq::op power channel capacity} in Section~\ref{sec::maximum channel capacity}. In the case of $w_r=1$, similarly, $\hat{\varpi}_{ii},\,i=1,\cdots,N,$ coincides with~\eqref{eq::op power radar MI}, as described in Section~\ref{sec::maximum MI}.

\section{Simulation Results}\label{sec::simulation}
In this section, we conduct extensive simulations to numerically verify the effectiveness of the proposed methods. A system with $2$ nodes is considered, and each node is equipped with~$8$~antennas. 

We consider (correlated) MIMO Rayleigh fading (complex Gaussian) channels for both communications and sensing, and the channels between them are independent. Both channels remain unchanged during the period of transmitting. Correlated channels are generated based on the Kronecker model, where the normalized correlation matrix has identical diagonal elements and random off-diagonal elements following uniform distributions between 0 and a maximal correlation coefficient $\epsilon_{c}$. The values of $\epsilon_c$ are set as 0.1 and 0.8 for communication and sensing, respectively.
We are particularly interested in the case where communication and sensing channels have the same mean path losses, i.e., $\sigma_h^2=\sigma_g^2=1$. This corresponds to the case where the mean sensing distance is approximately the square root of the communication distance.

 In all the simulations, the noise is complex AWGN. Other simulation parameters are shown in Tab. \ref{table1}, unless stated otherwise. The value of SNR is $(P/L)/\sigma_n^2$. For any given SNR and $L$, we compute the value for $P$ with $\sigma_n^2=1$, and then decide the value for $P_t$ and $P_d$ according to~$\kappa_{\rm{op}}$ in Theorem~\ref{theo::op ene_allo}.

\begin{table}[ht]\small
	\caption{Simulation parameters}
	\begin{center}
		\begin{tabular}{ll}
			\toprule[1.5pt]
			Parameter  & Value \\
			\hline
			Noise power ${\sigma_n^2}$ & 1 \\
			${\sigma_h^2}$ = ${\sigma_g^2}$ & 1 \\
			Number of Antennas & 8 \\
			Number of the training symbols $L_t$ & basic value 8 \\
			Number of the total symbols $L$ & basic values 128 \\
			\toprule[1.5pt]
		\end{tabular}
	\end{center}
	\label{table1}
\end{table}
	
To provide comparable results to the \textit{communication rate}, we introduce the \textit{sensing rate} as the sensing MI per unit time, which can be viewed as the mutual information between the sensing return and the targets. For simplicity, we assume each symbol lasts 1 unit time. Hence, sensing rate, as well as communication rate, equal to the ratio between their respective MI and the number of total transmitted symbols. 
Note that the term, sensing rate, is not widely used in the literature because it implicitly assumes that the sensing channel is not changing during the period of interest. Hence the sensing rate is also related to how fast sensing channel changes. 

For convenience, abbreviations of the waveform design schemes proposed in this paper and other comparison schemes for the legends in figures are listed as follows, where all the schemes include optimal power allocation, except for  ``without power allocation''.
\begin{itemize}
	\item OPTC (or OPTC with CEE): the scheme which optimizes communication only in the presence of CEE at the receiver;	
	\item OPTC without CEE: the scheme which optimizes communication only in the absence of CEE at the receiver;
	\item OPTS: the scheme which optimizes sensing only;
	\item JCAS: the scheme which optimizes both communication and sensing;
	\item Equal: the scheme in which the singular values for data symbol correlation matrix are allocated with equal values;
	\item Random: the scheme in which the singular values for data symbol correlation matrix are allocated with random values;
	\item ``without power allocation'': the scheme in which waveform is optimized for communication only without power allocation between the data and training sequences. 
\end{itemize}
For each result, Monte-Carlo simulations with 5000 independent trials are conducted and the average results are provided.

\begin{figure*}[htbp]
	\begin{minipage}[t]{0.5\textwidth}
		\centering
		\centerline{\includegraphics[width=1\textwidth]{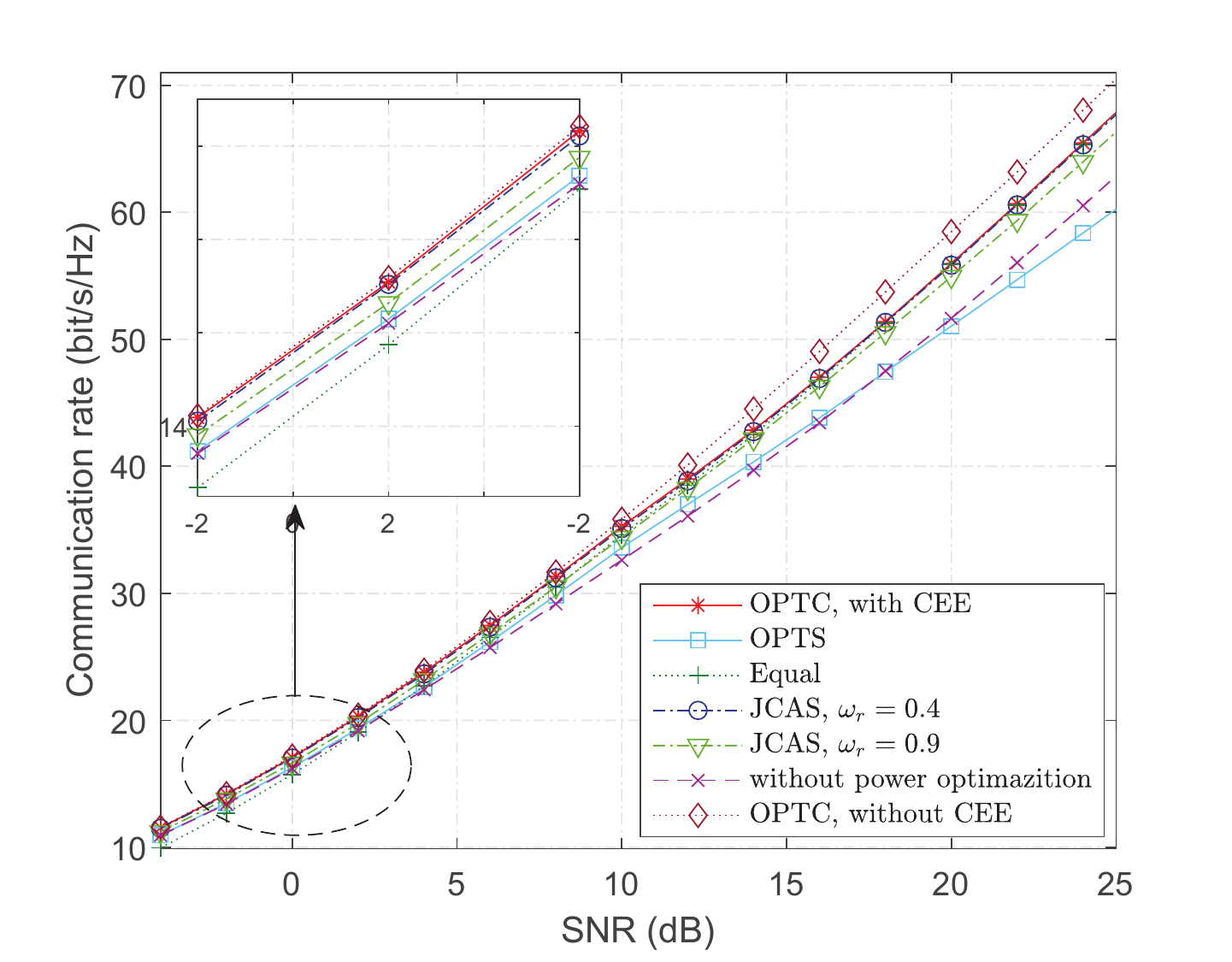}}
		\caption{Communication rate vs. SNR, where $L=128$, $L_t=8$.}
		\label{fig1_COM_rate}
	\end{minipage}
    \hspace{0.2in}
	\begin{minipage}[t]{0.5\textwidth}
		\centering
		\centerline{\includegraphics[width=1\textwidth]{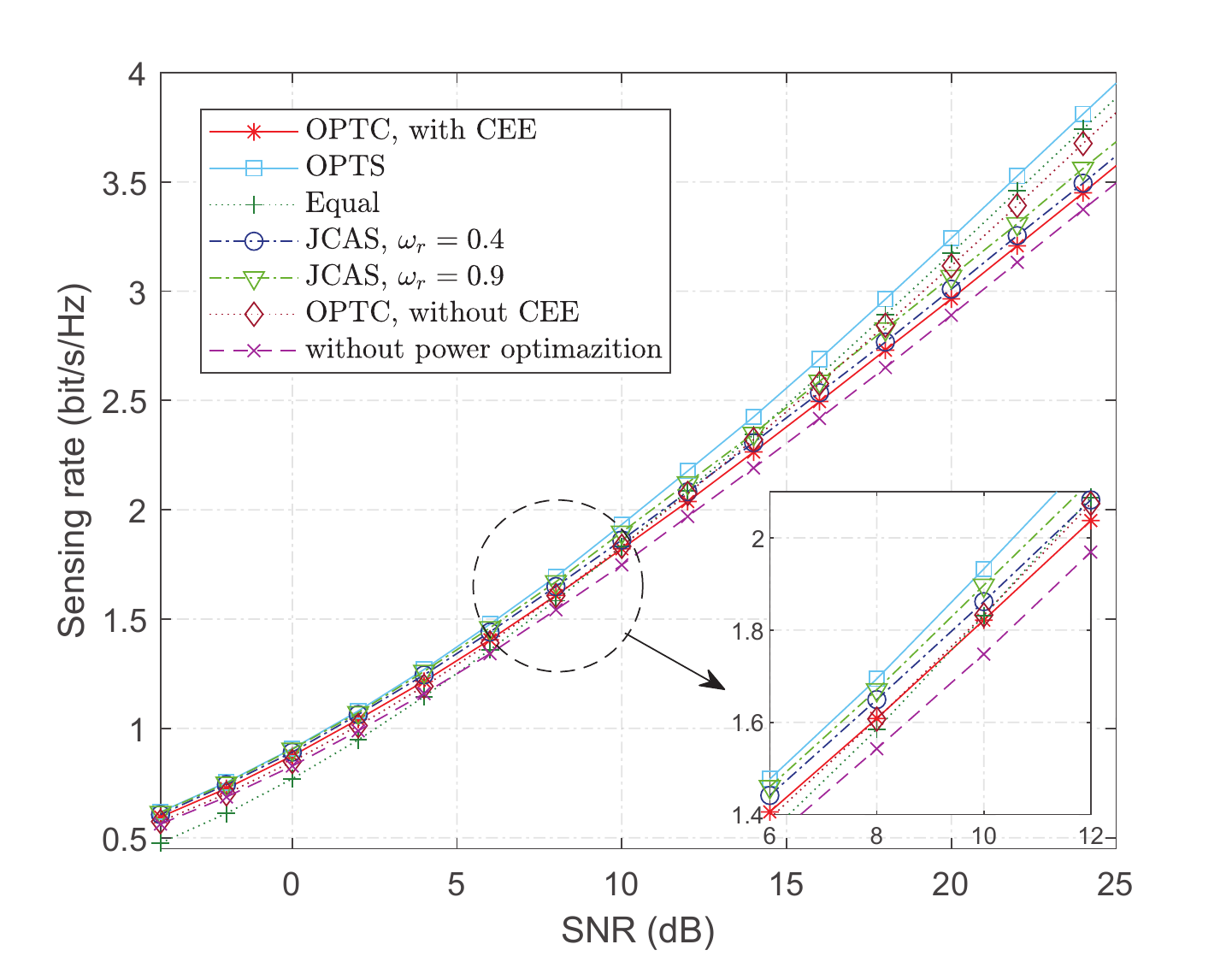}}
		\caption{Sensing rate vs. SNR, where $L=128$, $L_t=8$.}
		\label{fig2_RAR_rate}
	\end{minipage}
\end{figure*}
Fig.~\ref{fig1_COM_rate} plots the communication rate under correlated channels $\mathbf{H}$ with CEE on the SNR for different waveform design schemes. The communication rate without CEE is also plotted for comparison. We can see that the communication rate with CEE is upper bounded by the rate without CEE, and the gap between them decreases with the growth of SNR. The communication rate for OPTC is the maximum in the presence of CEE, as expected. The rate decreases towards that for OPTS with the growth of the weighting factor for sensing. We can also see that the rate for equal power allocation is the lowest in low SNR regimes, and the gap to other schemes decreases with the increase of SNR. In high SNR regimes, the rate approaches the case of OPTC. We also plot the rate without power optimization between the training and data signals. In this case, the rate lies between that for OPTS and Equal in the low SNR regimes, and the gap to that for OPTS decreases with the growth of SNR, and the rate surpasses that for OPTS when the SNR is larger than $16$ dB. This is because the difference between these waveforms decreases gradually as the SNR increases.

In Fig.~\ref{fig2_RAR_rate}, the sensing performance is evaluated under correlated $\mathbf{G}$ for different waveform design schemes. We find that the sensing rate is improved with the increase of the weighting factor for sensing, and approaches the result of OPTS.
For OPTC with CEE, the sensing rate is the lowest of the three in the high SNR region since the communication rate is maximized without taking into consideration the sensing rate. The sensing rate for other schemes are between those for optimal sensing and communications with CEE. 
For OPTC without CEE, the sensing rate is almost the same as, or close to, the rate for other schemes, including OPTC with CEE, OPTC without power optimization, and JCAS, but gradually surpass them. Since the correlated coefficient for $\mathbf{G}$ is larger than $\mathbf{H}$, the larger power allocated for training symbols, the smaller the sensing rate is. Therefore, the rate for OPTC without CEE is larger than those schemes in the high SNR regime. 
The Equal scheme achieves a sensing rate approaching OPTS with the increase of SNR.
The sensing rate for OPTC without power optimization is the lowest among all the schemes in the high SNR regime, and only larger than that for Equal in the lower SNR region. This shows that not only can the optimal power allocation minimize the CEE, but also improve the sensing capability.

\begin{figure*}[htbp]
	\begin{minipage}[t]{0.5\textwidth}
		\centering
		\includegraphics[width=1\textwidth]{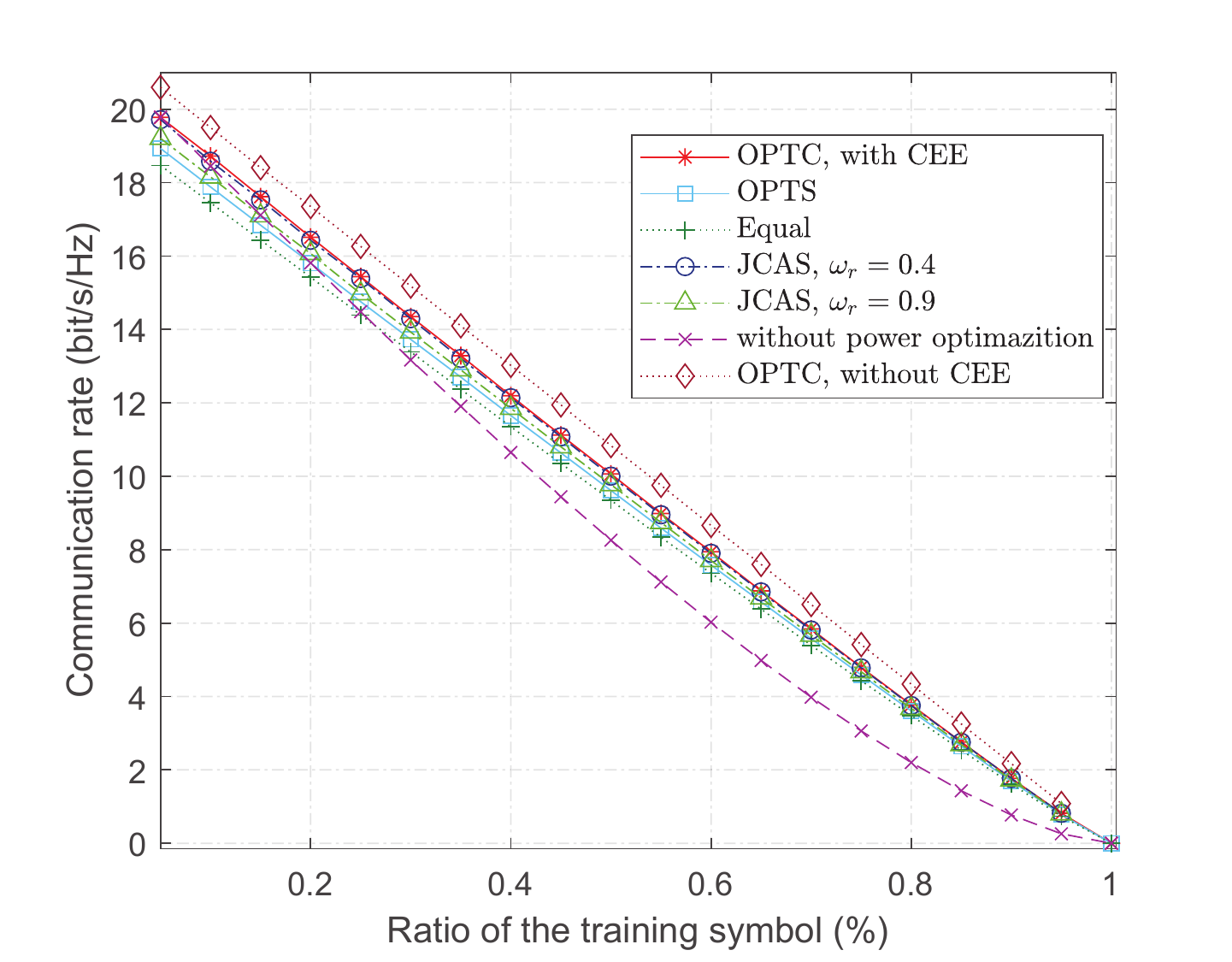}
		\caption{Communication rate vs. the ratio of training symbols where the total number of the transmit symbols $L=160$, ${\rm SNR} = 1$ dB.}
		\label{fig3_CC_rate_vs_ratio}
	\end{minipage}
    \hspace{0.2in}
	\begin{minipage}[t]{0.5\textwidth}
		\centering
		\centerline{\includegraphics[width=1\textwidth]{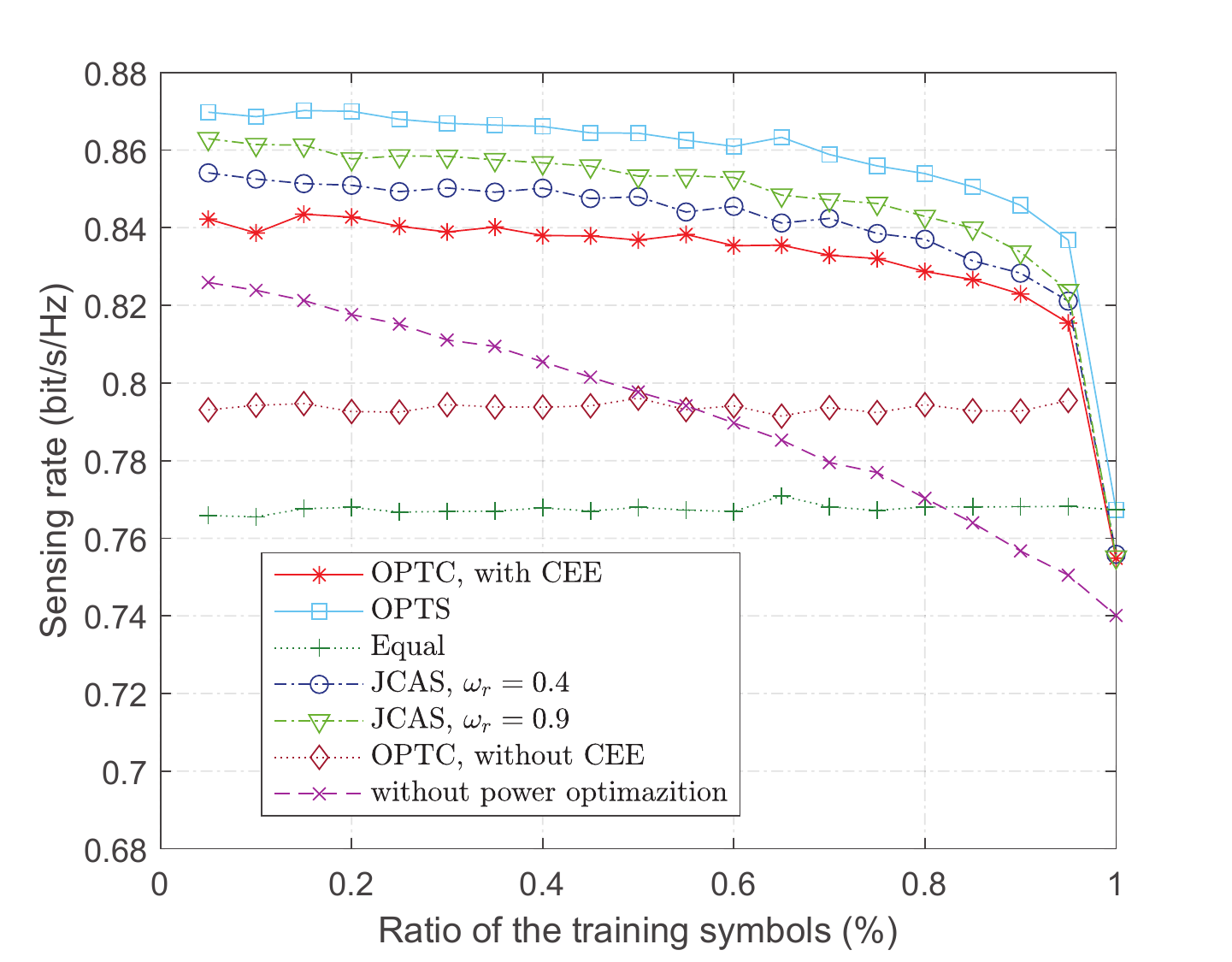}}
		\caption{Sensing rate vs. the ratio of training symbols where the total number of the transmit symbols $L=160$, ${\rm SNR} = 1$ dB.}
		\label{fig3_MI_rate_vs_ratio}
	\end{minipage}
\end{figure*}

Fig.~\ref{fig3_CC_rate_vs_ratio} plots the communication rate against the ratio of the training symbols to the transmit symbols for various waveform design schemes. The different ratio is obtained via changing the length of training and data symbols while keeping the total number unchanged. We see that the communication rate decreases significantly with the growth of the ratio. 
This is because the number of data symbols reduces as the ratio of training symbols increases, resulting in the subsequent decrease in communication rate. 
Moreover, from Fig.~\ref{fig3_CC_rate_vs_ratio}, we see that the communication rate of OPTC without power optimization between the training and data symbols decreases the fastest among all the schemes. In other words, power optimization is important to improve the communication rate. 

Fig.~\ref{fig3_MI_rate_vs_ratio} depicts the sensing rate against the ratio, where $\mathbf{G}$ remains correlated during the period of $L$ symbols. From the figure, we can see that the sensing rate changes little with the increase of the ratio for Equal and OPTC
without CEE.
This is because the orthogonal (training) symbols can achieve the same sensing rate as the (data) symbols with equal power allocation. 
For Equal, both the orthogonal training and data symbols with equal power allocation are used for sensing, and therefore, changes in the ratio have little impact on the sensing rate.  
For OPTC without CEE, all power is used for communication and changes in the ratio do not affect the waveform design for communication, and therefore, the sensing rate of OPTC without CEE keeps unchanged.
 We also see that the curves of OPTC, OPTS, and JCAS show slight declines and approach the curve for Equal as the ratio increases. This is because the impact of the training symbols on the sensing rate is increasingly strong with the growth of the ratio, causing the sensing rate to approach that for Equal (as the training symbols are orthogonal). 
 However, for OPTC without power optimization, the sensing rate declines rapidly with the growth of the ratio. This is because the power allocated to the orthogonal training symbols increases linearly with the ratio. The more power for orthogonal training symbols, the smaller the sensing rate is for the strongly correlated~$\mathbf{G}$.

\begin{figure}[!t]
	\centering 
	\subfigure[Sensing rate \& Communication rate vs. the number of transmit symbols $L$.]{ 
	\centering 
	\label{fig:subfig:a} 
	\includegraphics[width=0.48\textwidth]{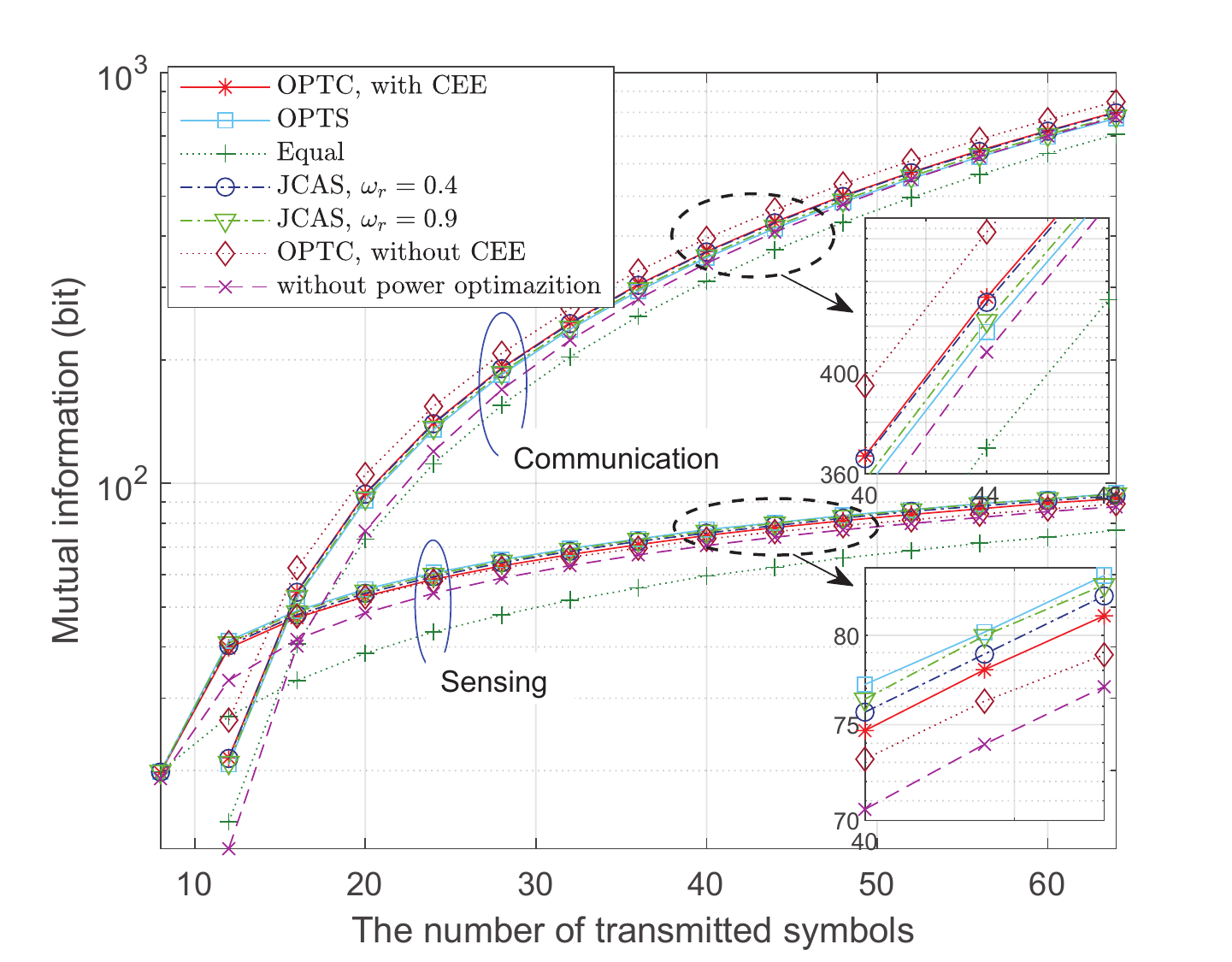}} 
    \subfigure[Mutual information vs. $L$.]{ 
    \centering 
    \label{fig:subfig:b} 
	\includegraphics[width=0.48\textwidth]{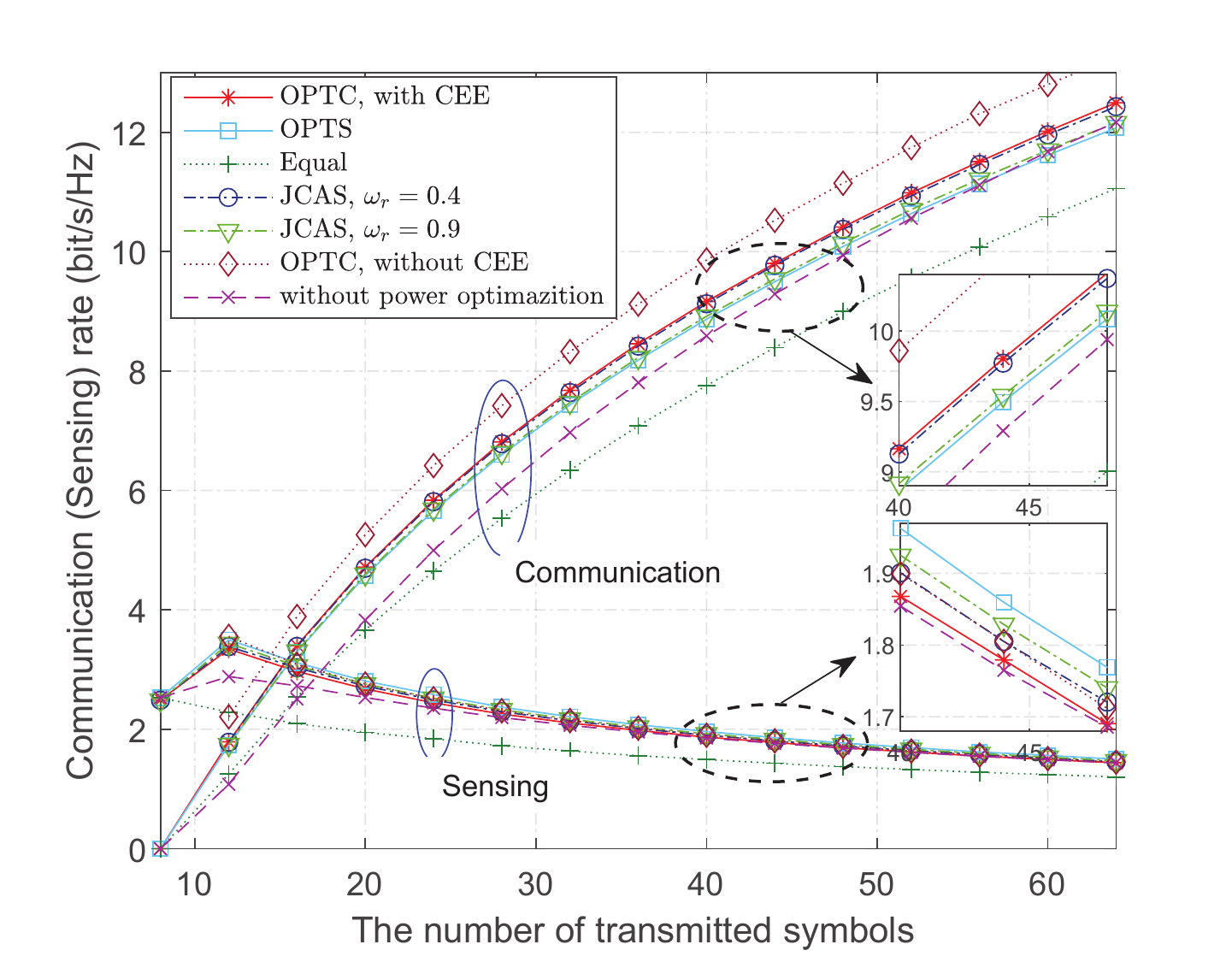}}
    \subfigure[Mutual information vs. $L$.]{ 
    \centering 	
    \label{fig:subfig:c} 
	\includegraphics[width=0.48\textwidth]{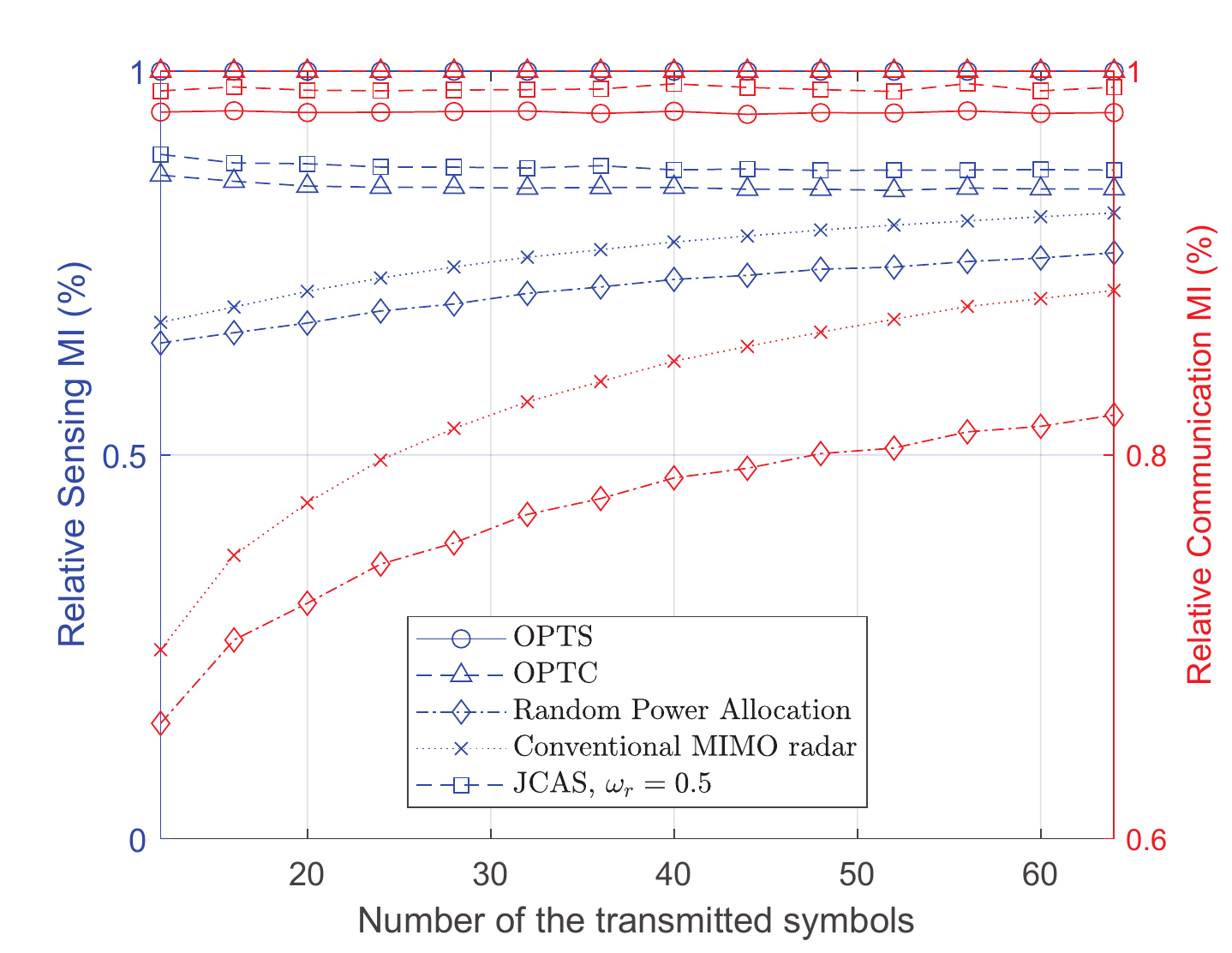}} \caption{Mutual information and rates vs. the total number of transmitted symbols $L$, where $L_t =8$ and ${\rm SNR} = 1$ dB.} \label{fig:MI_CC_vs_L} 
\end{figure}

In the subfigures of Fig.~\ref{fig:MI_CC_vs_L}, both MI and rates for communication and sensing are plotted against the total number of the transmitted symbols $L$. The number of training symbols $L_t$ is fixed to 8, and only the number of data symbols varies. 
In Fig.~\ref{fig:subfig:a}, we can see the MI for both sensing and communication increases with $L$, and the increase rate for communication is much higher than that for sensing. This is because the MI for communication increases almost linearly with the number of data symbols transmitted, but the MI for sensing only increases logarithmically since the sensing channel $\mathbf{G}$ is assumed to remain unchanged and more data symbols only increase the SNR. When $L<16$, the MI for sensing is greater than that for communication. This is because the CEE has a strong impact on the MI for communication.

Fig.~\ref{fig:subfig:b} shows the rates for communication and sensing, corresponding to the MI for them shown in Fig.~\ref{fig:subfig:a}. 
We also find that the communication rate increases with $L$, while the sensing rate increases first and then decreases with $L$. 
This is due to the fact that, as the number of data symbols $L_d$ increases (since $L_t$ is a constant), the communication capacity, i.e., communication rate, increases. On the contrary, as for sensing, with the increase of $L$, the channel variation rate reduces, leading to a decrease in the sensing rate. 

Fig.~\ref{fig:subfig:c} shows the relative sensing and communication MI values, which are normalized to their optimal values, respectively. We can see that there is a large gap between those optimal and random waveform designs for communications. The optimal waveform design for sensing leads to better performance than the random waveform design scheme. 
We also see that the performance gap decreases with the growth of $L$. This is because the signal overhead reduces with data symbols increasing.
For sensing, the optimal waveform design can achieve a better performance than the orthogonal waveform that is employed in conventional MIMO radar~\cite{Song2010Reducing}, supposing the channel correlation is known. Both the optimal sensing and JCAS schemes can lead to better MI. We also find that the relative sensing MI for JCAS slightly decreases first and then converges to a constant value. This is because the ratio $L_t/L$ decreases rapidly and convergences to zero with the increase of $L$. 
\begin{figure*}[htbp]
	\begin{minipage}[t]{0.5\textwidth}
		\centering
		\centerline{\includegraphics[width=1\textwidth]{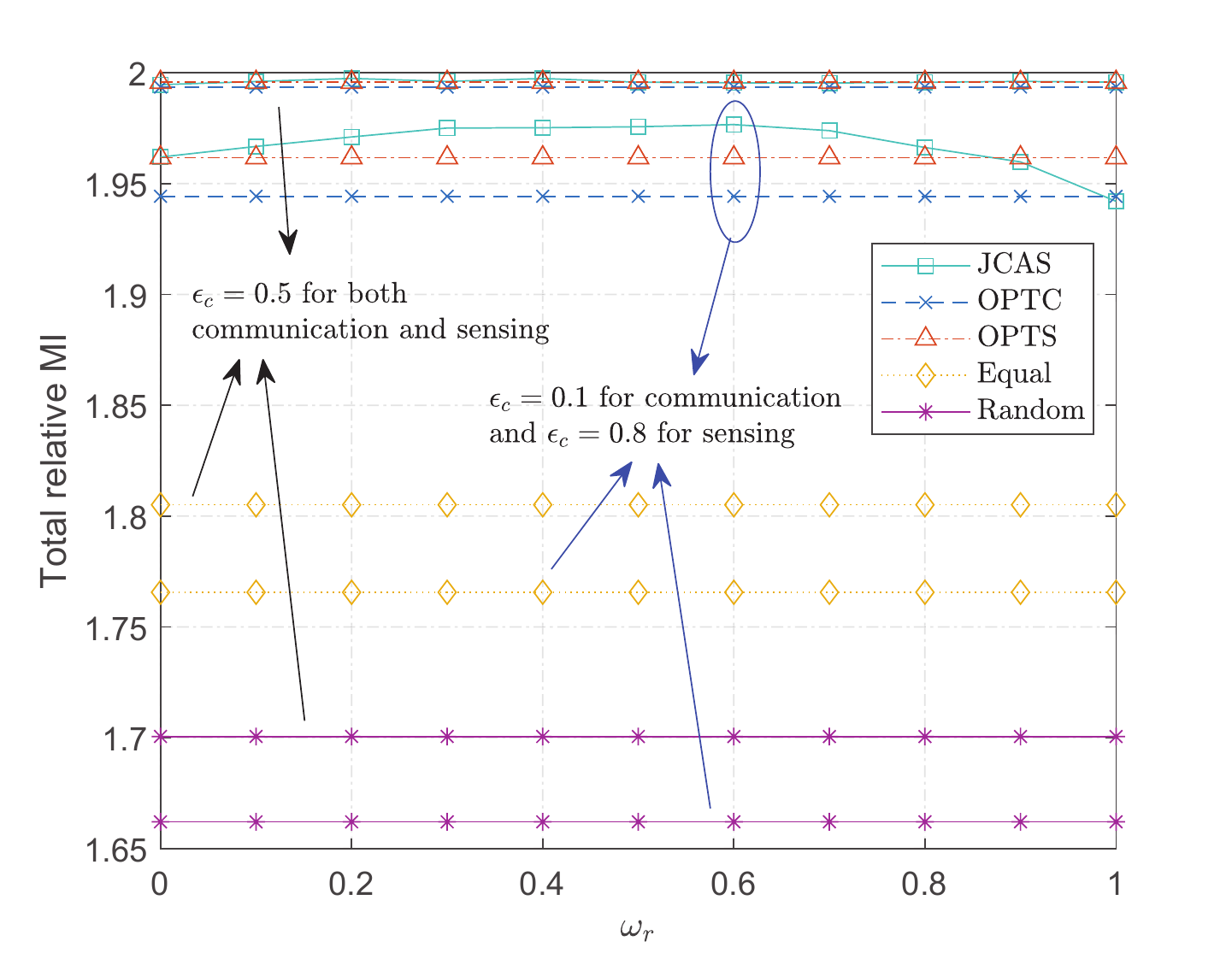}}
		\caption{Total relative MI vs. $\omega_r$ for different waveform design methods with $\text{SNR}=1$ dB, and $\epsilon_{c}$ for communication and sensing are 0.8 and 0.1, respectively.}
		\label{fig_7_ralative_rate_vs_w}
	\end{minipage}
    \hspace{0.2in}
	\begin{minipage}[t]{0.5\textwidth}
		\centering
		\centerline{\includegraphics[width=1\textwidth]{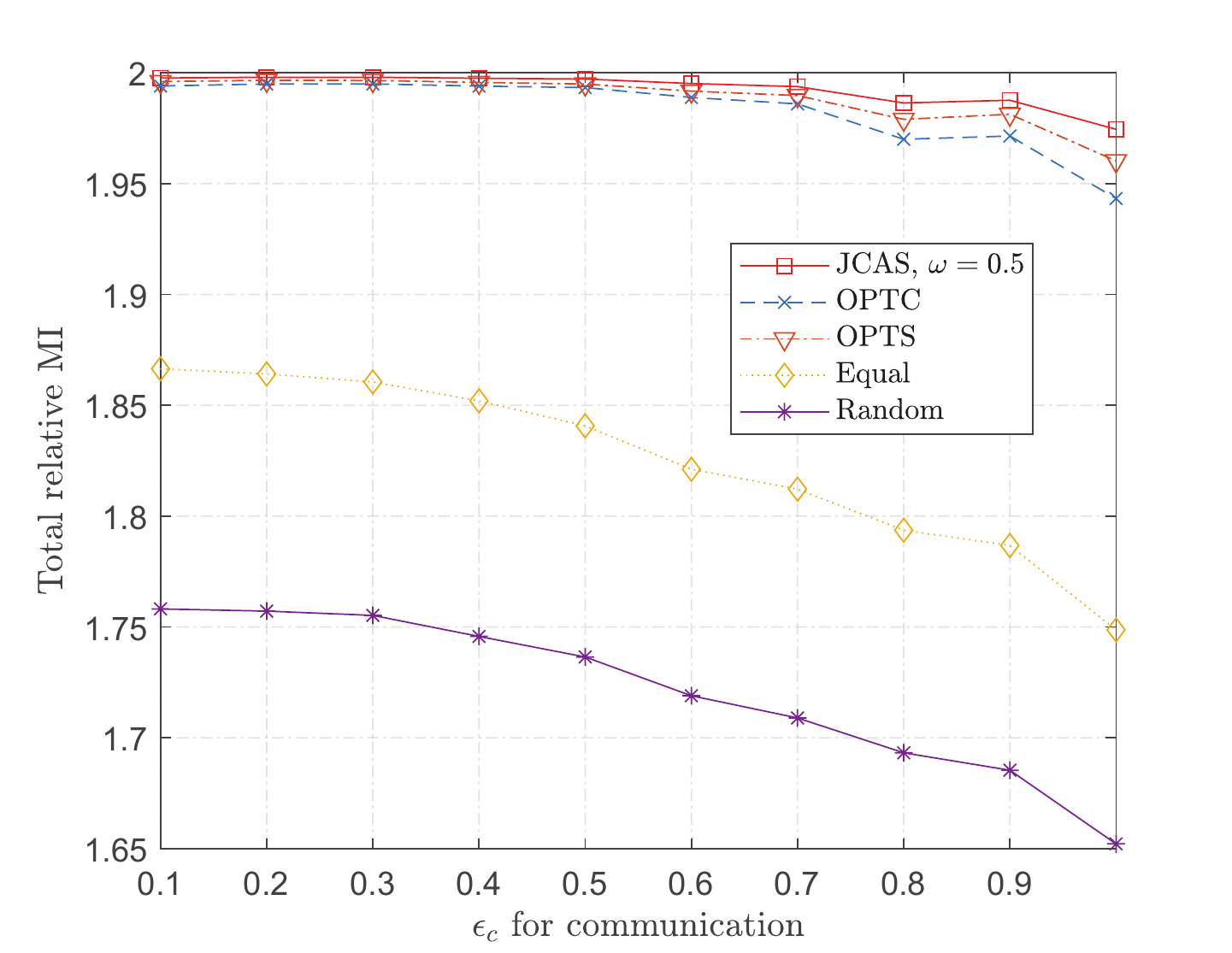}}
		\caption{Total relative MI vs. the maximal correlation coefficient~$\epsilon_{c}$ for communication channels, where $\text{SNR}=1$ dB, and $\epsilon_{c}$ for sensing is 0.3.}
		\label{fig_8_ralative_rate_vs_cor}
	\end{minipage}
\end{figure*}

Fig.~\ref{fig_7_ralative_rate_vs_w} plots the sum of the relative communication and sensing MI against the weighting factor for sensing under different waveform design schemes and correlation coefficient $\epsilon_{c}$. We can see that the total relative MI for JCAS is always the highest in the case of $\epsilon_{c} = 0.1$ for communication and $\epsilon_{c} =0.8$ for sensing. The total relative MI for the JCAS increases first and then decreases with the growth of $w_r$, while the total relative MI value for other schemes does not change with $w_r$. The total relative MI for the equal power allocation waveform design is the lowest. 
As expected, the total relative MI values under JCAS with $w_r=0$ and $w_r=1$ are equal to those for OPTC and OPTS, respectively. 
In the case of $\epsilon_{c} = 0.5$ for both communication and sensing, we find the weighted sum of the communication and sensing MI are almost the same across OPTC, OPTS and JCAS schemes. In other words, a single waveform can be optimized to maximize both communication and sensing MI when the communication and sensing channels have the same correlation characteristics.

Fig.~\ref{fig_8_ralative_rate_vs_cor} plots the sum of the relative communication and sensing MI against the maximum correlation coefficient $\epsilon_c$ for sensing channel $\mathbf{G}$. We can see that the total relative MI decreases with the growth of the correlation coefficient, especially when the coefficient is large. We also find that the JCAS scheme is less affected by the channel correlation and outperforms the other schemes. On the contrary, the increase in the correlation coefficient has a significant impact on the total relative MI for random and equal power allocation schemes, and the MI is drastically reduced with the correlation increases. 

\begin{figure}
	\centering
	\includegraphics[width=0.5\textwidth]{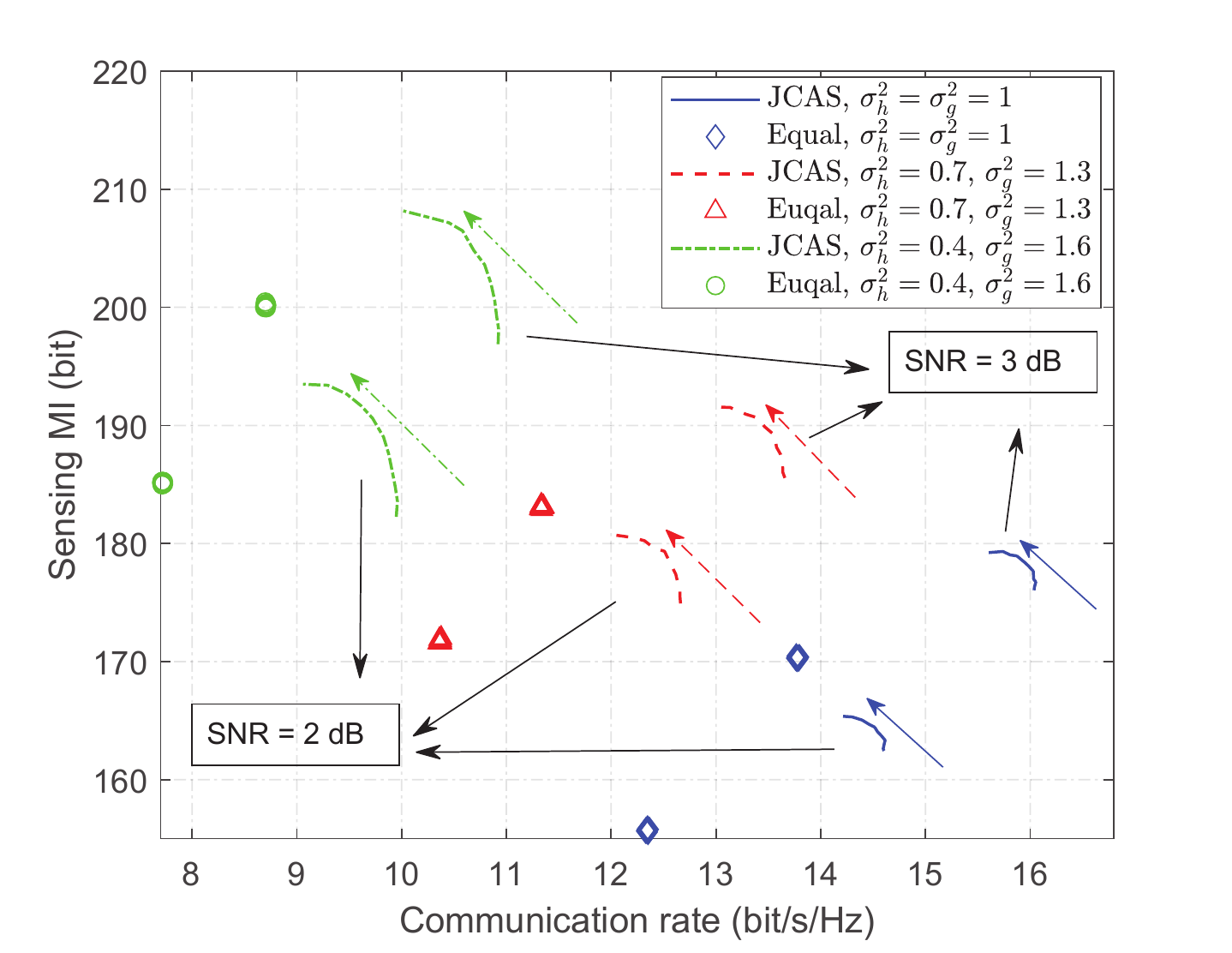}
	\caption{Trade-off curve between communication and radio sensing, where $\epsilon_c$ for communication and sensing are 0.8 and 0.3, respectively.}
	\label{fig_JCAS_pareto}
\end{figure}
Fig.~\ref{fig_JCAS_pareto} demonstrates how the communication rate and sensing MI change with the weighting factor $\omega_r$ which increases from $0$ to $1$ along the direction of the arrow. We consider two cases where the mean path losses for sensing and communication are (i) the same, i.e., $\sigma_h^2=\sigma_g^2=1$, and (ii) different, i.e., $\sigma_h^2=0.7$ and $\sigma_g^2=1.3$, or $\sigma_h^2=0.4$ and $\sigma_g^2=1.6$.  
In both cases, we can see that the sensing MI improves with the growth of $w_r$, and the sensing MI and communication rate are enhanced with the increase of SNR. We also see that the JCAS waveform scheme outperforms ``Equal''. 
We also find that there is a gap between the trade-off curves for the two cases, which is due to the different path losses for communication and sensing.
In addition, with the decrease of the mean path loss for communication, the variation range of the sensing MI increases. In other words, the stronger path loss for communication is, or the weaker path loss for sensing is, the greater the sensing MI is.
In practice, the appropriate weighting value can be selected based on the trade-off curve to meet the requirements of both communication and radio sensing.

\section{Conclusion}\label{sec::conclusion}
We presented the optimal waveform design methods based on MI for MIMO JCAS systems by considering a typical packet structure, including training and data symbols. We proposed an optimal power allocation scheme under MMSE estimators for MIMO communication channels. Among the three optimization strategies we studied, the design that maximizes the weighted sum of relative MI is shown to achieve the best overall performance for joint sensing and communication. The design is also less affected by varying channel correlation than the other two waveform design methods. The methodologies presented in this paper can be further extended to study waveform optimization in JCAS MIMO uplink and multiuser-MIMO systems.

Other important observations obtained in this paper are summarized as follows:
\begin{itemize} 
	\item In most cases, the signal waveform cannot be optimized to maximize both the communication and sensing MI at the same time. However, the JCAS waveform design is nearly optimal for both in the low SNR regime where the noise has a dominant impact on the design.
	\item The proposed optimal power allocation can efficiently improve the MI for communication and has an insignificant impact on the sensing MI. When there are more data symbols than training symbols, the sensing MI under power allocation is higher than the case without power allocation. 
	\item The ratio of mean path losses between communication and sensing can have a strong impact on the range of optimized MI values. A higher ratio can lead to a larger range. If the ratio is small enough, our waveform design can approach the optimal MI values for both communication and sensing. 
\end{itemize}

\appendices

\section{Proof of Theorem~\ref{theo::op ene_allo}}\label{proof::theo 1}
Let the SVD of the spatial correlation matrix of $\tilde{\mathbf{H}}$ be
\begin{equation}\label{eq::Sigma_H tilde}
\varSigma_{\tilde{\mathbf{H}}}=\frac{1}{N}\mathbb{E}\{\tilde{\mathbf{H}}\tilde{\mathbf{H}}^H\}=\mathbf{U}_{\tilde{\mathbf{H}}}\mathbf{\Lambda}_{\tilde{\mathbf{H}}}\mathbf{U}_{\tilde{\mathbf{H}}}^H,
\end{equation}
where $\mathbf{U}_{\tilde{\mathbf{H}}}$ is a unitary matrix and 
$\mathbf{\Lambda}_{\tilde{\mathbf{H}}}={\diag}\left\{ \mu_{11},\cdots,\mu_{ii},\cdots,\mu_{NN}\right\}$
is a diagonal matrix with~$\mu_{ii}$ being the singular values.	
Based on \eqref{eq::MI communication} and the lower bound of CEE, we can get an upper bound for the MI between $\mathbf{X}_d$ and ${\mathbf{Y}}^d_{\rm com}$ as 
\begin{subequations}\label{eq::MI communication CEE1}
	\begin{align}
	& I\left( \mathbf{X}_d;{\mathbf{Y}}^d_{\rm com}|\hat{\mathbf{H}}\right) \\
	&\!\leq\! L_d \log_2\!\left[\! \det\!\left(\!\frac{\left(\sigma_h^2 \!-\! \mathcal{C}_e\right)\mathbf{X}_d\mathbf{U}_{\tilde{\mathbf{H}}}\mathbf{\Lambda}_{\tilde{\mathbf{H}}}\mathbf{U}_{\tilde{\mathbf{H}}}^H\mathbf{X}_d^H}{\frac{P_d}{L_d}\mathcal{C}_e+\sigma_{n}^2}\!+\!\mathbf{I}_{N}\!\right) \!\right]\!\label{eq::MI communication CEE1 a}\\
	&\!=\! L_d \log_2\!\left[\!\det\!\left(\!\frac{\left(\sigma_h^2 \!-\! \mathcal{C}_e\right)\mathbf{\Lambda}_{\tilde{\mathbf{H}}}\left(\mathbf{X}_d\mathbf{U}_{\tilde{\mathbf{H}}} \right)^H \mathbf{X}_d\mathbf{U}_{\tilde{\mathbf{H}}} }{\frac{P_d}{ L_d}\mathcal{C}_e+\sigma_{n}^2}\!+\!\mathbf{I}_{N}\!\right) \!\right]\!, \label{eq::MI communication CEE1 b}
	\end{align}
\end{subequations}
where the equality in~\eqref{eq::MI communication CEE1 a} can be achieved when the lower bound CEE $\mathcal{C}_e$ is achieved;~\eqref{eq::MI communication CEE1 b} is based on the \textit{Sylvester's determinant} theorem~\cite{gilbert1991positive}.

Let $\mathbf{\Xi}=\left(\mathbf{X}_d\mathbf{U}_{\tilde{\mathbf{H}}} \right)^H \mathbf{X}_d\mathbf{U}_{\tilde{\mathbf{H}}}=(\mathbf{Y}^d_{\rm com})^H\mathbf{Y}^d_{\rm com}$, and its $(i,j)$-th entry is $\xi_{ij}$.
Based on Hadamard's inequality for the determinant and trace of a positive semi-definite Hermitian matrix, we have $\det\left(\mathbf{\Xi}_{N \times N} \right) \leq \prod_{i=1}^{N} \xi_{ii}$.
The upper bound of the MI between $\mathbf{X}_d$ and ${\mathbf{Y}}^d_{\rm com}$ can be obtained as 
\begin{equation}\label{eq::MI communication CEE2}
\begin{aligned}
I\left( \mathbf{X}_d;{\mathbf{Y}}^d_{\rm com}|\hat{\mathbf{H}}\right) & \!\leq\! \sum_{i=1}^{N} \!\underbrace{\left\lbrace L_d\, \log_2\!\left[\!\frac{\left(\sigma_h^2 \!-\!  \mathcal{C}_e\right)\mu_{ii}\xi_{ii} }{\frac{P_d}{ L_d}\mathcal{C}_e+\sigma_n^2}\!+\!1\!\right]\right\rbrace }_ {f\left(\xi_{ii}\right)},
\end{aligned}
\end{equation}
where the equality is achieved if and only if $\mathbf{\Xi}$ is a diagonal matrix.

It is easy to see that $f\left(\xi_{ii}\right)$ is a monotonically decreasing concave function of $\xi_{ii}$. Based on Jensen's inequality, we can obtain the expectation of $I\left( \mathbf{X}_d;{\mathbf{Y}}^d_{\rm com}|
\hat{\mathbf{H}}\right)$, i.e., $\mathbb{E}\left[I\left( \mathbf{X}_d;{\mathbf{Y}}^d_{\rm com}|\hat{\mathbf{H}}\right) \right]$, as given in \eqref{eq::average MI commmunication}, where \eqref{eq::average MI commmunication c} is obtained by substituting {$\mathbb{E}\left[ \xi_{ii}\right]= \frac{1}{N} \Tr\left(\mathbf{\varSigma_{\mathbf{X}_d}}\right)=  \frac{1}{N} \Tr\left(\mathbf{\Xi}\right) = L_d \sigma_d^2=\frac{\kappa P}{N}$} and ${(1-\kappa)P} =N L_t \sigma_t^2$ into \eqref{eq::average MI commmunication b};
\begin{subequations}\label{eq::average MI commmunication}
	\begin{align}
	\mathbb{E}\left[I\left( \mathbf{X}_d;{\mathbf{Y}}^d_{\rm com}|\hat{\mathbf{H}}\right) \right] & \leq \mathbb{E}\left[\sum_{i=1}^{N} f\left(\xi_{ii}\right)\right] \\
	& \!=\! \sum_{i=1}^{N} \mathbb{E}\left[ f\left(\xi_{ii}\right)\right]
	\leq \sum_{i=1}^{N} f\left(\mathbb{E}\left[ \xi_{ii}\right]\right)\\
	& \!=\! L_d\sum_{i=1}^{N} \left\lbrace\log_2\left[\frac{\left(\sigma_h^2- \mathcal{C}_e\right)\mathbb{E}\left[\xi_{ii}\right] }{\frac{P_d}{ L_d}\mathcal{C}_e+\sigma_n^2}+1\right] \right\rbrace \label{eq::average MI commmunication b}\\
	& \!=\! L_d\sum_{i=1}^{N} \!\left\lbrace\!\log_2\!\left[\!\frac{P L_d}{(L_d\!-\!N)N\sigma_n^2} \frac{\kappa\left(1-\kappa\right)}{\!-\kappa \!+\!\frac{ L_d}{L_d-N}\left(1\!+\!\frac{N \sigma_n^2}{P\sigma_h^2}\!\right) }\!+\!1\!\right]\! \!\right\rbrace\! \label{eq::average MI commmunication c}.
	\end{align}
\end{subequations}
From~\eqref{eq::average MI commmunication c}, we can see that different values of~$\kappa$ can lead to different mean MI values for a given total energy $P$ and the number of antennas $N$ through the SNR, denoted by 
\begin{equation}
\rho = \frac{L_d P}{(L_d-N)N\sigma_n^2}\cdot \frac{\kappa\left(1-\kappa\right)}{\!-\kappa \!+\!\frac{L_d}{L_d-N}\left(1+\frac{N \sigma_n^2}{P\sigma_h^2}\right) }.
\end{equation}
Referring to the cases considered in~\cite{Hassibi2003How}, to maximize $\rho$ over $0\leq \kappa \leq 1$, we can separately consider the following three cases:
\begin{enumerate}
	\item $L_d = N$: The maximal $\rho$, denoted by $\rho_{\max}$, is obtained as
	$$\rho_{\max} = \frac{P^2 \sigma_h^4}{4N\sigma_n^2(N\sigma_n^2+P\sigma_h^2)},$$  
	from which it follows that $\kappa_{\rm{op}} = \frac{1}{2}$.
	\item $L_d > N$: We rewrite $\rho$ as
	$$\rho=\frac{L_d P}{(L_d-N)N\sigma_n^2}\cdot \frac{\kappa
		\left(1-\kappa\right)}{-\kappa + \Gamma},$$
	where $\Gamma = \frac{L_d}{L_d-N}\left(1+\frac{N \sigma_n^2}{P\sigma_h^2}\right)>1$.
	The maximal SNR $\rho_{\max}$ can be obtained as
	$$\rho_{\max} \!=\! \frac{L_d P}{(L_d\!-\!N)N\sigma_n^2}{\!\left(\! \sqrt{\Gamma}-\sqrt{\!\Gamma\!-1}\right) ^2},$$
	and it follows that $\kappa_{\rm{op}} = \Gamma-\sqrt{\Gamma(\Gamma-1)}$.
	\item $L_d < N$: We rewrite $\rho$ as
	$$\rho=\frac{L_d P}{(N- L_d)N\sigma_n^2}\cdot \frac{\kappa\left(1-\kappa\right)}{\kappa - \Gamma},$$ 
	where $\Gamma = \frac{L_d}{L_d-N}\left(1+\frac{N \sigma_n^2}{P\sigma_h^2}\right) < 0$.
	The maximal SNR $\rho_{\max}$ can be obtained as
	$$\rho_{\max} \!=\! \frac{L_d P}{(L_d\!-\!N)N\sigma_n^2}{\!\left(\! \sqrt{\!-\Gamma}-\!\sqrt{\!-\Gamma\!-1}\right) ^2},$$
	and it follows that $\kappa_{\rm{op}} = \Gamma+\sqrt{\Gamma(\Gamma-1)}$.
\end{enumerate}

Therefore, we can obtain the lower bound of the CEE as presented in~\eqref{eq::CRLB}.



\ifCLASSOPTIONcaptionsoff
  \newpage
\fi


\bibliographystyle{IEEEtran}
\bibliography{reference_sensing}

\end{document}